\documentclass[epj]{svjour} 

\usepackage[utf8]{inputenc} 

\usepackage{graphicx,bbm,amsfonts,amssymb,amsopn,amsmath}
\usepackage{dsfont}
\usepackage{calligra}

\usepackage{hyperref}



\def\be{\begin{equation}}
\def\ee{\end{equation}}
\def\bra#1{\mathinner{\langle{#1}|}}
\def\ket#1{\mathinner{|{#1}\rangle}}

\def\aa{{\rm a}}

\newcommand{\ave}[1]{{\langle #1\rangle}}

\newcommand{\ua}{\uparrow}
\newcommand{\da}{\downarrow}
\newcommand{\ii}{ {\rm i} }
\newcommand{\dd}{ {\rm d} }

\newcommand{\ZZ}{\mathbb{Z}}

\newcommand{\CC}{\mathbb{C}}

\newcommand{\z}{{\rm z}}

\newcommand{\hm}[2]{{\hat{#1}^{(#2)}}}
\newcommand{\hmp}[2]{{\hat{#1}^{'(#2)}}}
\newcommand{\hmu}[2]{{\hat{\mathbf{#1}}_{#2}}}
\newcommand{\hmup}[2]{{\hat{\mathbf{#1}}'_{#2}}}
\newcommand{\mmu}[2]{{\mathbf{#1}_{#2}}}
\newcommand{\mmup}[2]{{\mathbf{#1}'_{#2}}}

\def\WW{{{\hat{\mathbf{W}}}}}

\newcommand{\LL}{{\hat{\cal L}}}

\newcommand{\half}{{\textstyle\frac{1}{2}}}

\newcommand{\mm}[1]{{\mathbf{#1}}}
\newcommand{\bb}[1]{{\mathbf{#1}}}

\def\tr{{{\rm tr}}}

\def\one{\mathbbm{1}}

\newcommand{\RR}{\mathbb{R}}
\newcommand{\NN}{\mathbb{N}}

\begin{document}

\title{Strongly correlated non-equilibrium steady states with currents ---  quantum and classical picture}

\author{Berislav Bu\v{c}a\inst{1,2} and Toma\v{z} Prosen\inst{3} }
\titlerunning{Strongly correlated non-equilibrium steady states \ldots}
\institute{Department of Medical Physics and Biophysics, University of Split School of Medicine, 21000 Split, Croatia \and Clarendon Laboratory, University of Oxford, Parks Road, Oxford OX1 3PU, United Kingdom \and Department of physics, Faculty of Mathematics and Physics, University of Ljubljana,
Jadranska 19, SI-1000 Ljubljana, Slovenia}
\abstract{In this minireview we will discuss recent progress in the analytical study of current-carrying non-equilibrium steady states (NESS) that can be constructed in terms of a matrix product ansatz.  We will focus on one-dimensional exactly solvable strongly correlated cases, and will study both quantum models, and classical models which are deterministic in the bulk. The only source of classical stochasticity in the time-evolution will come from the boundaries of the system. Physically, these boundaries may be understood as Markovian baths, which drive the current through the system. The examples studied include the open XXZ Heisenberg spin chain, the open Hubbard model, and a classical integrable reversible cellular automaton, namely the Rule 54 of Bobenko {\em et al}. [Commun. Math. Phys. {\bf 158}, 127 (1993)] with stochastic boundaries. The quantum NESS can be at least partially understood through the Yang-Baxter integrability structure of the underlying integrable bulk Hamiltonian, whereas  for the Rule 54 model NESS seems to come from a seemingly unrelated integrability theory. In both the quantum and the classical case, the underlying matrix product ansatz defining the NESS also allows for construction of novel conservation laws of the bulk models themselves. In the classical case, a modification of the matrix product ansatz also allows for construction of states beyond the steady state (i.e., some of the decay modes -- Liouvillian eigenvectors of the model). We hope that this article will help further the quest to unite different perspectives of integrability of NESS (of both quantum and classical models) into a single unified framework.} 

\maketitle

\section{Introduction}

Studying strongly-correlated (interacting) many-body systems is challenging. Only several classes of models admit analytical treatment, either in quantum \cite{bethe1,bethe2,bethe3,sutherland} or classical \cite{fadtak} Hamiltonian context. Such models (in particular in the quantum domain) typically poses families of local and quasi-local conservation laws \cite{Prosenreview,Prosen1,ProsenIlievski,IMP,IMPZ} (or the corresponding particle content \cite{Enej1}) that have important ramifications on their physical behavior, especially transport \cite{Prosen1} and relaxation towards equilibrium \cite{revieweq}. These models fall under the category of Bethe ansatz, or Yang-Baxter, solvability \cite{bethe1,bethe2,bethe3} and are called integrable. Important physically relevant quantum integrable models include the Heisenberg XXZ spin chain \cite{bethe2} and the one-dimensional Hubbard model \cite{Hubbardbook}. In algebraic analogy, one can define integrability also for classical stochastic many-body systems, the prominent examples of which
include the various exclusion process models, such as the partially asymmetric simple exclusion process \cite{ASEPreview}. The link between the classical stochastic models and the quantum models is achieved by performing a simple Wick rotation. 

Beyond the Bethe ansatz (or Yang-Baxter) paradigm little is known. Generalizing the Bethe ansatz equations to more than one (plus one) dimension has been attempted, though the solutions found appear to be very special and fine tuned \cite{tetrahedron}. However, a possibility of generalizing the notion of exact solvability beyond the Bethe ansatz para\-digm remains a completely open issue. 

Apart from the aforementioned issues of conservation laws of integrable models, another key aspect of interest in statistical physics out-of-equilibrium is the nature and relaxation towards a non-equilibrium steady state (NESS) of a model connected to baths. For conceptual and mathematical simplicity much recent research focuses on modeling the baths as Markovian reservoirs, both for quantum models, e.g, Refs.~\cite{Arenas1,Ilievski,EverestLesanovsky,ZnidScarVar,ManzanoChuangCao,KPS2,IlievskiZunkovic,Prosen2,WolffSheikhanKollath} (for a recent review in the context of exact solutions see \cite{Prosenreview}), classical stochastic models \cite{ASEPreview} and classical deterministic models  \cite{ProsenMejia,ProsenBuca} driven by simple stochastic reservoirs
that act locally on the boundaries of the model.

In this minireview we will discuss the common aspects of both quantum and classical deterministic models driven by Markovian baths acting on the boundaries. The examples we will look at include the open XXZ model (the NESS of which was first solved in \cite{Prosen1,Prosen2}), the open fermi Hubbard model \cite{ProsenHubbard,PopkovProsen} and the classical deterministic bo\-undary driven Rule 54 cellular automaton (the NESS was constructed in \cite{ProsenMejia} and the dominant decay modes were solved in \cite{ProsenBuca}).

The basic conceptual common aspects of all the studied models are that the models
\begin{itemize}
\item are all deterministic and hence time-reversible in the bulk (i.e., there is no stochasticity),
\item all admit solutions for the NESS in terms of a matrix product ansatz (MPA), and
\item all the solutions for the NESS may be used to construct novel kinds of conservation laws for the bulk models, unobtainable by standard techniques.
\end{itemize}
The main conceptual differences are that only the NESS of the quantum models can be understood in terms of Bethe ansatz integrability. The reversible cellular automaton Rule 54 seems to be a part of a different class of exactly solvable integrable models. The physical properties of the models, including the nature of the transport in the steady state, are quite diverse. 

 This paper is organized as follows. In Sect.~\ref{sec2} we will discuss the quantum models. We will first describe the general setup of Markovian baths modeled with the Lindblad master equation \cite{Lindblad,GKS76,openbook}. We then continue by discussing briefly algebraic Bethe ansatz techniques and how they relate to the quantum integrability of the NESS. Two specific examples, the XXZ spin chain and the Hubbard model are discussed. The solution for the the Hubbard model necessitates the introduction of a novel, infinitely dimensional Lax operator which has not been used in other (equlibrium) contexts. This section will follow the results given in \cite{Prosenreview,Prosen1,Prosen2,PopkovProsen}. In Sect.~\ref{sec3} we begin by introducing the concept of classical deterministic, reversible cellular automata on a diamond shaped lattice in 1+1 dimensions, and continue by discussing the notion of \emph{holographic ergodicity}, which is a 
 condition on the bulk deterministic dynamics and boundary stochastic processes in the Markov matrix propagating the time evolution, which guarantees that the boundary driven system will relax to a unique steady state (NESS). We then show how to calculate the NESS of a boundary driven Rule 54 cellular automaton by introducing a novel cancellation mechanism algebra used to construct the MPA. We later generalize this to obtain the most important part of the Liouvillian decay spectrum (including the leading decay mode). Sect.~\ref{sec3} mostly follow Refs. \cite{ProsenMejia,ProsenBuca}.
 We conclude by discussing some notable open questions in Sect.~4. 

\section{Quantum integrable non-equilibrium steady states}
\label{sec2}

One of the most analytically tractable methods of studying interacting many-body quantum systems out-of-equili\-brium is the framework based on the Lindblad master equation \cite{Lindblad,GKS76}, and it has attracted a lot of attention recently, e.g. see \cite{Ilievski,KPS2,ManzanoHurtado,Garrahan,ZnidaricFCS,Monthus,PigeonXuereb}. In this setup the system is driven out-of-equilibrium by a set of baths and is modelled by dividing the generator of the time evolution into two parts: 1) the Hamiltonian $H$, representing the physical system itself, and 2) a set of Lindblad jump operators $L_\mu$, representing the action of the baths on the system. The whole time evolution is given by the following 
Lindblad-Gorini-Kosakowski-Sudarshan master equation,
\begin{eqnarray}
&&\frac{\dd}{\dd t}\rho(t) = \LL\rho(t):= \nonumber \\
&&-\ii [H,\rho(t)] + \sum_\mu \left(2L_\mu \rho(t) L_\mu^\dagger - \{L_\mu^\dagger L_\mu,\rho(t)\}\right),
\label{eq:lindeq}
\end{eqnarray}
where, 
\begin{itemize}
\item $\rho(t)$ is the density matrix of the system at time $t$, and
\item $\LL$ is the generator of the full time evolution. It is a superoperator acting on the vector space of linear operators  ${\rm End}({\cal H})$ over the Hilbert space of the system ${\cal H}$.
\end{itemize}
The equation describes an open system which can, depending on the Lindblad jump operators $L_\mu$, exchange energy, particles, etc. with the environment (the baths). 
In the long-time limit ($t \to \infty$) for finite systems the time-evolution relaxes the system to a non-equlibrium steady state (NESS) $\rho_\infty$ \cite{Spohnreview}. The equation for the NESS is given as,
\begin{equation}
\frac{\dd}{\dd t}\rho_\infty= \LL\rho_\infty=0, \label{NESS}
\end{equation}
and is thus an eigenvector of $\LL$ of eigenvalue zero, $\rho_\infty \in {\rm ker} \LL$.
This steady state may be degenerate \cite{sym1,sym2,sym3} or unique \cite{Spohnreview,Evans} depending on whether the generator $\LL$ fulfils different theorems that we do not discuss here (see e.g. \cite{Spohnreview}).  

In certain cases it is possible to find the NESS analytically. Various cases when the many-body system $H$ is non-interacting or equivalent to a non-interacting system are known (e.g., \cite{P08,Z10,Z11,Kos}), but we will be interested here in the cases when $H$ is genuinely interacting and integrable \cite{Prosen1,Ilievski,IlievskiZunkovic,Prosen2,ProsenHubbard,PopkovProsen,KPS,IlievskiProsenLai} and the baths act only on the boundaries of the system (\emph{boundary-driving}).  More specifically, we will focus on two illustrative examples: the open boundary driven Heisenberg XXZ spin $1/2$ chain and the boundary driven open 1D fermionic Hubbard model. It is important to note that th the bulk system $H$ is purely Hermitian and the Lindblad jump operators $L_\mu$, which we will study, are purely incoherent and thus represent the action of classically stochastic baths on the system. 

\subsection{The maximally boundary-driven XXZ spin chain}
\label{sec:XXZ}
The NESS of the open XXZ spin chain was first solved in the pair of Letters \cite{Prosen1,Prosen2} using a novel kind of 
tri-diagonal infinite-dimensional matrix product ansatz. This solution was later understood in terms of the underlying Bethe ansatz integrability of the XXZ model \cite{Prosenreview,PIP,KPS,IlievskiPhD}. 

We begin by defining the model. The Hamiltonian of the XXZ spin chain of $n$ spin sites is given as,
\begin{eqnarray}
H^{\rm{XXZ}}&=&\sum_{j=1}^{n}h^{\rm{XXZ}}_{j,j+1}= \nonumber \\
&=&\sum_{j=1}^{n}2(\sigma^{+}_{j}\sigma^{-}_{j+1}+\sigma^{-}_{j}\sigma^{+}_{j+1})+
\Delta \sigma^{\rm{z}}_{j}\sigma^{\rm{z}}_{j+1}. \label{ham}
\end{eqnarray}
The operators $\sigma^{\alpha}_{j}$ are the standard Pauli matrices acting on site $j$. More specifically,
\begin{equation}
\sigma^{\alpha}_{j}\equiv  \one_{2^{j-1}}\otimes \sigma^{\alpha}\otimes \one_{2^{n-j}}.
\label{eqn:spin_varibles}
\end{equation}
The parameter $\Delta$ is the anisotropy and depending on the value of $\Delta$ the model exhibits quite different physical properties. At the isotropic point $\Delta=1$ there is a quantum phase transition from a gapless ($|\Delta|<1$) to a gapped ($|\Delta|>1$) phase.

The Lindblad jump operators are given as, 
\begin{equation}
L_{1}=\sqrt{\varepsilon}\sigma^{+}_{1},\qquad L_{2}=\sqrt{\varepsilon}\sigma^{-}_{n},
\label{maxdriv}
\end{equation}
where $\varepsilon$ is the system-bath coupling and represents the strength with which the baths act on the system. It is also equal to the stochastic rate with which the baths act on the system \cite{openbook}. The system is considered maximally driven as the bath coupled to the first site 1 only, injects spin into (and the bath coupled to the last site $n$ only ejects spin out of) the system. In other words, we have an ideal incoherent source of magnetization attached to the left end and ideal sink of magnetization attached to the right end. Thus, the time evolution in the bulk is fully coherent and the baths act fully incoherently only on the boundary sites. 

To present the solution to Eq.~\ref{NESS} and find the NESS of the system we will first begin by introducing a few important concepts in quantum integrability. For an actual introduction to the topic of algebraic Bethe ansatz and quantum integrability the reader is referred to Refs. \cite{bethe1,bethe2,bethe3}. 

 Let us first introduce an \emph{auxiliary} space ${\cal H}_\aa$ that may be finite or infinite-dimensional. Operators acting on this space will be denoted with bold letters. Operators acting on the physical space ${\cal H}_p$ of the system in addition to this auxiliary space will have a number subscript denoting on which spin site of the system they act. 

We begin by studying a particular form of Yang-Baxter equation, the so-called RLL relation, 
\begin{equation}
R_{1,2}(\varphi_1-\varphi_2)\bb{L}_{1}(\varphi_1)\bb{L}_{2}(\varphi_2) = 
\bb{L}_{1}(\varphi_2)\bb{L}_{2}(\varphi_1)R_{1,2}(\varphi_1-\varphi_2),
\label{RLL}
\end{equation}
where $R_{1,2}$ is called the \emph{$R$-matrix} and $\bb{L}_j$ are called the \emph{Lax operators} and the parameters $\varphi_j$ are called the \emph{spectral parameters}. The quantum integrability of the XXZ spin chain is based on the underlying quantum group symmetry of the model, namely the $\calligra{U_{q}}  ({\mathfrak{sl}}_{2}) $ algebra, which is a quantum deformation of the ${\frak{sl}}_{2}$ algebra, or angular momentum algebra, and is defined by the standard $q-$deformed algebraic relations of its generators, ${\bb{s}}^{\rm{z}}_{s}, {\bb{s}}^{+}_{s}, {\bb{s}}^{-}_{s}$,
 \begin{equation}
[{{\bb{s}}}^{+}_{s},{{\bb{s}}}^{-}_{s}]=[2 {{\bb{s}}}^{\rm{z}}_{s}]_{q},\quad [{\bb{s}}^{\rm{z}}_{s},{\bb{s}}^{\pm}_{s}]=\pm {\bb{s}}^{\pm}_{s}, \label{algebra}
\end{equation}
where $[x]_q := (q^x-q^{-x})/(q-q^{-1})$ and $q$ is the deformation parameter related to the anisotropy $\Delta$ as $\Delta= \rm{cos} (\gamma)$, $q=\rm{exp} (\ii \gamma)$. A (generically) infinite-dimensional highest weight representation of this algebra over the auxiliary space ${\cal H}_a$ is given as,
\begin{eqnarray}
{{\bb{s}}}^{\rm{z}}_{s}&=&\sum_{k=0}^{\infty}(s-k)\ket{k}\bra{k},\nonumber \\
{{\bb{s}}}^{+}_{s}&=&\sum_{k=0}^{\infty}[k+1]_{q}\ket{k}\bra{k+1},\nonumber \\
{{\bb{s}}}^{-}_{s}&=&\sum_{k=0}^{\infty}[2s-k]_{q}\ket{k+1}\bra{k},
\end{eqnarray}
and is characterised by a continuous complex spin parameter $s$. Note that for certain values of $s$ the representation reduces to a finite-dimensional one. For instance, for half-integer $s$, the above sums truncate due to existence of lowest weight states and ${{\bb{s}}}^{j}_{s}$
generate the standard unitary irreducible representations.

The $4\times 4$ $R$-matrix of the XXZ spin chain reads,
\begin{eqnarray}
R_{1,2}(\varphi)&=& \frac{\sin\varphi}{2}(h^{\rm{XXZ}}_{1,2} + \one\cos\gamma)  \\
&-&\frac{1+\cos\varphi}{2} \one\sin\gamma + \frac{1-\cos\varphi}{2}\sigma^\z_1 \sigma^\z_2 \sin\gamma  \nonumber.
\label{RXXZ}
\end{eqnarray}
The Lax matrix can be given in terms of the generators of $\calligra{U_{q}}  ({\mathfrak{sl}}_{2}) $ as, 
\begin{equation}
{\bb{L}}(\varphi,s)=
\begin{pmatrix}
\sin{(\varphi +\gamma {\bb{s}}^{\rm{z}}_{s})} & (\sin{\gamma}){\bb{s}}^{-}_{s} \\
(\sin{\gamma}){\bb{s}}^{+}_{s} & \sin{(\varphi -\gamma {\bb{s}}^{\rm{z}}_{s})} 
\end{pmatrix}, \label{lax}
\end{equation}
which is given as a matrix on a one-site physical space and we write the dependence on the spin parameter $s$ explicitly. 

By expanding the RLL relation~(\ref{RLL}) with $\varphi_{1,2} = \varphi \pm \delta$ to first order in $\delta$, we find the Sutherland equation \cite{Prosenreview,Sutherland},
\begin{align}
&[h^{\rm{XXZ}}_{1,2},\bb{L}_{1}(\varphi,s)\bb{L}_{2}(\varphi,s)] \label{sutherland}\\
&= -2\sin\gamma \left(\widetilde{\bb{L}}_{1}(\varphi,s)\bb{L}_{2}(\varphi,s) - \bb{L}_{1}(\varphi,s)\widetilde{\bb{L}}_{2}(\varphi,s)\right) \nonumber
\end{align}
where
${\widetilde{\bb{L}}}(\varphi,s) := \partial_{\varphi}{\bb{L}}(\varphi,s)$. 

We can now state the solution to the equation for the NESS \eqref{NESS} in the terms of a matrix product ansatz,
\begin{equation}
\rho_{\infty}=\frac{\Omega \Omega^{\dagger}}{\tr \Omega \Omega^\dagger}, \label{ansatz} 
\end{equation}
with
\begin{equation}
\Omega=\bra{0}\bb{L}_1(\varphi,s)\bb{L}_2(\varphi,s)\ldots \bb{L}_n(\varphi,s)\ket{0}, \label{sol}
\end{equation}
where $\bb{L}_j(\varphi,s)$ is the Lax operator defined by Eq. \eqref{lax} acting on physical spin site $j$ and $\ket{0}$ is the highest weight state in the auxiliary space ${\cal H}_\aa$.
By inserting the ansatz \eqref{ansatz} into the equation for the NESS \eqref{NESS} and using the Sutherland equation \eqref{sutherland}, we are left with a set of equations only on the boundaries (coming from the Lindblad operators $L_1$ and $L_2$) for the spectral parameter $\varphi$ and the spin parameter $s$. These boundary equations may be solved uniquely with,
\begin{align}
\varepsilon&=4\ii[s]^{-1}_{q}\cos{(\gamma s)}, \\
\varphi&=0,
\label{solution}
\end{align}
which yields an imaginary spin parameter $s$ as a function of dissipation rate $\varepsilon$. We note that this setup can be 
generalized to include arbitrary boundary magnetic fields in the Hamiltonian and left-right asymmetric dissipation rates $\varepsilon_{\rm L,R}$ which in turn fix two fully complex parameters $s,\varphi$ \cite{Prosenreview,Chihiro}.

This solution allows us to compute all the observables in the NESS efficiently and, in the thermodynamic limit $n \to \infty$, all the the correlation functions \cite{BucaProsenCC}. More specifically, the spin current is ballistic ($\Delta <1$), sub-diffusive scaling asymptotically as $~1/n^2$ ($\Delta=1$) and exponentially decaying (insulating for $\Delta>1$) \cite{Prosen1} , see Fig.~\ref{fig:XXZ}.

Note as well that, as a consequence of the Yang-Baxter equation, the operator $\Omega(s,\varphi)$ forms a commuting family
of {\em non-diagonalizable} triangular (in the eigenbasis of $\sigma^\z_j$) transfer matrices
\begin{align}
[\Omega(s,\varphi),\Omega(s',\varphi')] = 0,\quad\forall s,s',\varphi,\varphi'\in\CC,
\end{align}
which generate quasilocal conserved charges \cite{Prosen1,ProsenIlievski,IMPZ} going beyond the standard families of strictly local charges generated from a fundamental auxiliary space ($s=1/2$) representation of the transfer matrix \cite{bethe2,bethe3}.

\begin{figure*}
 \centering	
\vspace{-1mm}
\includegraphics[width=1.8\columnwidth]{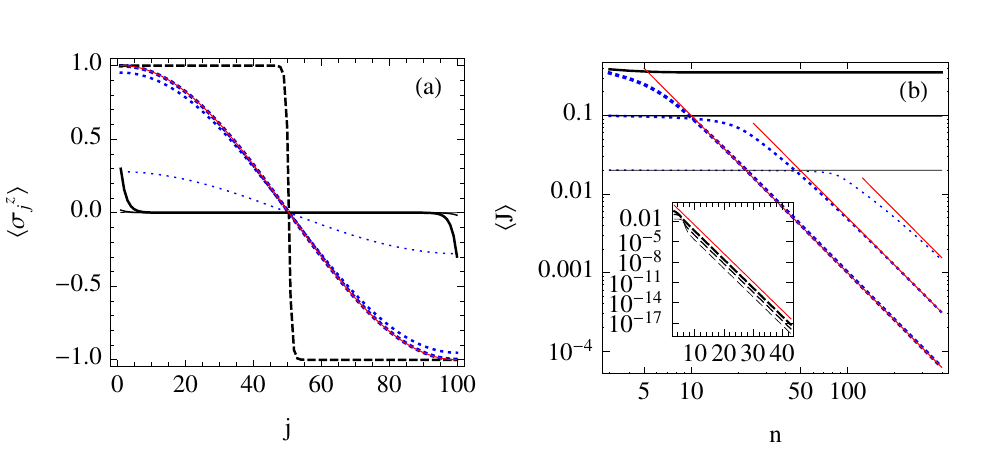}
\vspace{-1mm}
\caption{(From Ref.~\cite{Prosen2}) Spin magnetization $\ave{\sigma^{\rm z}_j}$ profiles for $n=100$ computed from the solution for the NESS (a), and spin currents $\ave{J}$ depending on system size $n$ (b), for boundary driven XXZ model with $\Delta=3/2$ (dashed), $\Delta=1$ (dotted/blue), 
	$\Delta=1/2$ (full curves), shown for three different couplings $\varepsilon=1,1/5, 1/25$ with thick, medium, thin curves, respectively. 
	Red full curves show closed-form asymptotic results (see \cite{Prosenreview} for details) for $\Delta=1$ in the main panels (a,b),
	and  $\Delta>1$ in (b)-inset.}
\label{fig:XXZ}
\end{figure*}

\subsection{The maximally driven open fermionic Hubbard model}

This subsection will present the results of Refs. \cite{ProsenHubbard,PopkovProsen}. The fermionic Hubbard model with $n$ sites is given by the following Hamiltonian,
\begin{align}
&H=-2\sum_{s,x} (c^\dagger_{s,j}c_{s,j+1}+c^\dagger_{s,j+1}c_{s,j} ) \nonumber\\
&+  u \sum_{j} (2n_{\uparrow,j}-1)(2n_{\downarrow,j}-1) \nonumber\\
&+ \mu_{\rm L}(n_{\uparrow,1}+n_{\downarrow,1}-1)+\mu_{\rm R}(n_{\uparrow,n}+n_{\downarrow,n}-1), 
\label{eq:HubHfermi}
\end{align}
where $c_{s,j}$ are the Fermi operators acting on site $j$ and $j\in\{1\ldots n\}$, $s\in\{\ua,\da\}$ denotes each fermion's spin, $u$ is a parameter determining the interaction strength between the fermions on nearest-neighbor sites, $n_{s,j}:=c^\dagger_{s,j} c_{s,j}$ is the local density, and $\mu_{\rm L}$ ($\mu_{\rm R}$) are the left (right) boundary chemical potentials. We take the following Lindblad jump operators,
 \begin{align}
&L_1 = \sqrt{\varepsilon_{\rm L}}c^\dagger_{\uparrow,1},\qquad L_2 = \sqrt{\varepsilon_{\rm L}}c^\dagger_{\downarrow,1}, \nonumber \\ 
&L_3 = \sqrt{\varepsilon_{\rm R}}c_{\uparrow,n},\qquad L_4 = \sqrt{\varepsilon_{\rm R}}c_{\downarrow,n}. \label{lindhub}
\end{align}
They inject (eject) fermions with rates $\varepsilon_{\rm L}$ on site $1$ ($\varepsilon_{\rm R}$ on site $n$), so we have again an ideal incoherent source/sink situation.

Upon performing the Wigner-Jordan transformation, 
\begin{equation}
c_{\uparrow,j}=P^{(\sigma^1)}_{j-1} \sigma_j^{1,-},\quad c_{\downarrow,j}=P^{(\sigma^1)}_n P^{(\sigma^2)}_{j-1} \sigma_j^{2,-}
\end{equation}
where
$P^{(\sigma^1)}_j:=\sigma_1^{1, \rm z}\sigma_2^{1,\z} \cdots \sigma_j^{1, \rm z}$, $P^{(\sigma^2)}_j:=\sigma_1^{2, \rm z}\sigma_2^{2, \z} \cdots \sigma_j^{2, \rm z}$, and of course $\sigma^{1,0}_j\equiv \sigma^{2,0}_j\equiv\one$,
the fermionic Hubbard model transforms into a spin ladder, 
\begin{align}
&H^{\rm Hub} = 2  \sum_{j=1}^{n-1} \sum_{d=1}^2 \left ( \sigma^{d,+}_j \sigma^{d,-}_{j+1} +  \sigma^{d,-}_{j} \sigma^{d,+}_{j+1} \right ) + \frac{u}{2}   \sum_{j=1}^{n} \sigma^{1, \rm{z}}_j\sigma^{2, \rm{z}}_j \nonumber \\
&+ \frac{u}{2} \sigma^{1,\z}_{1} \sigma^{2,\z}_{1} + \frac{\mu_{\rm L}}{2}\left(\sigma^{1,\z}_{1} + \sigma^{2,\z}_{1}\right) \nonumber \\ 
&+\frac{u}{2} \sigma^{1,\z}_{n} \sigma^{2,\z}_{n} + \frac{\mu_{\rm R}}{2}\left(\sigma^{1,\z}_{n} + \sigma^{2,\z}_{n}\right), \label{Hhub} \\ 
&h^{\rm Hub} _{j,j+1}=2 \sum_{d=1}^2 \left ( \sigma^{d,+}_j \sigma^{d,-}_{j+1} +  \sigma^{d,-}_{j} \sigma^{d,+}_{j+1} \right ) \nonumber \\
&+ \frac{u}{2} \left (  \sigma^{1, \rm{z}}_j\sigma^{2, \rm{z}}_j+\sigma^{1, \rm{z}}_{j+1}\sigma^{2, \rm{z}}_{j+1} \right), 
\label{Hhubloc1}
\end{align}
where the superscripts $1,2$ denote on which leg of the ladder the spin operator acts. 

The Lindblad jump operators following the Wigner-Jordan transformation become,
\begin{align}
&L_1 = \sqrt{\varepsilon_{\rm L}}\sigma^{1,+}_1,\quad L_2 = \sqrt{\varepsilon_{\rm L}} \sigma^{2,+}_1,\nonumber \\ 
&L_3 = \sqrt{\varepsilon_{\rm R}} \sigma^{1,-}_n ,\quad L_4 = \sqrt{\varepsilon_{\rm R}}  \sigma^{2,-}_n .
\label{lindhubspin}
\end{align}
In Ref. \cite{PopkovProsen} a novel Lax operator (unrelated to the standard one found by Shastry \cite{Shastry}) was found which allows for the solution to the equation for the NESS \eqref{NESS}. The solution relies on a generalized \emph{Sutherland-Shastry} equation. We will first write the needed relations formally and then define the representations of the operators used. 

Let us define the operators $\bb{S},\acute{\bb{S}},\grave{\bb{S}},\bb{T},\acute{\bb{T}},\grave{\bb{T}} \in {\rm End}({\cal H}_{ a}\otimes {\cal H}_{ p})$ (i.e., acting over both, the auxiliary and the physical space), and $\bb{X},\bb{Y}\in {\rm End}({\cal H}_{a})$ (scalars over the physical space). We will also likewise define two local \emph{kinetic energy} densities for the two species of spins,
\begin{align}
&h^{\sigma^{1}}_{j,j+1}:=2 \sigma^{1,+}_j \sigma^{1,-}_{j+1} + 2 \sigma^{1,-}_{j} \sigma^{1,+}_{j+1},  \\
&h^{\sigma^2}_{j,j+1}:=2 \sigma^{2,+}_{j} \sigma^-_{2,j+1} + 2 \sigma^{2, -}_j \sigma^{2,+}_{j+1}.
\label{Hhubloc2}
\end{align}
We now assume that these operators have the following properties,
\begin{eqnarray}
&& [h^{\sigma^1}_{j,j+1},\mm{S}_j \mm{X} \mm{S}_{j+1}]  = \acute{\mm{S}}_j \mm{X} \mm{S}_{j+1} - \mm{S}_j\mm{X} \grave{\mm{S}}_{j+1}, \label{id1}\\
&& [h^{\sigma^2}_{j,j+1},\mm{T}_j \mm{X} \mm{T}_{j+1}] = \acute{\mm{T}}_j \mm{X} \mm{T}_{j+1} - \mm{T}_j\mm{X}\grave{\mm{T}}_{j+1}, \label{id2}\\
&& \mm{S}\acute{\mm{T}} + \mm{T}\acute{\mm{S}} - \grave{\mm{S}}\mm{T} - \grave{\mm{T}}\mm{S} = [\mm{Y}-u \sigma^{1, \z} \sigma^{2,\z},\mm{S}\mm{T}], \label{id3}\\
&& [\mm{S},\mm{T}] = 0, \label{id4}\\
&&[\mm{X},\mm{Y}] = 0. \label{id5}
\end{eqnarray}
It can be shown by applying these identities that a Lax operator, appealingly factorized as, 
\begin{equation}
\mm{L} = \mm{S}\mm{T}\mm{X}, \label{HubbardLax}
\end{equation}
and another operator, playing the role of its 'derivative',
\begin{equation}
\widetilde{\mm{L}} = \half(\mm{S}\acute{\mm{T}} + \mm{T}\acute{\mm{S}} + \grave{\mm{S}}\mm{T} + \grave{\mm{T}}\mm{S} - \{\mm{Y},\mm{S}\mm{T}\})\mm{X}, \label{DerLax}
\end{equation}
satisfy a generalized Sharstry-Sutherland relation,
\begin{equation}
[h^{\rm Hub}_{j,j+1},\mm{L}_j \mm{L}_{j+1}] = (\widetilde{\mm{L}}_j + \mm{Y} \mm{L}_j)\mm{L}_{j+1} - \mm{L}_j(\widetilde{\mm{L}}_{j+1} + \mm{L}_{j+1}\mm{Y}).
\label{suthsha}
\end{equation}
In \cite{PopkovProsen} a representation of these operators was found and we will now state it. We label the basis in the auxiliary space as ${\cal V}=\{0^{+},\frac{1}{2}^{+},\frac{1}{2}^{-},1^{-},1^{+},\frac{3}{2}^{+},\frac{3}{2}^{-},2^{-},2^{+}\ldots\}$ so that ${\cal H}_{\rm a}={\rm lsp}\{\ket{v}; v\in{\cal V}\}$. The reason for this is that these operators can be represented as the so-called \emph{walking graph states}, as discussed in \cite{ProsenHubbard,PopkovProsen,Prosenreview}. 
Then $\mm{S}$ is given by the following matrix representations, 
\begin{eqnarray}
&&\mm{S}^+ =\sqrt{2} \sum_{k=0}^\infty \left(\ket{k^+}\bra{k\!+\!\half^+} + \ket{k\!+\!\half^-}\bra{k\!+\!1^-}\right),\label{eq:Sp} \\
&&\mm{S}^-  = \sqrt{2} \sum_{k=0}^\infty (-1)^k \left(\ket{k\!+\!\half^+}\bra{k^+} + \ket{k\!+\!1^-}\bra{k\!+\!\half^-}\right),\nonumber\\
&&\mm{S}^0 = \sum_{k=0}^\infty \bigl(\ket{2k^+}\bra{2k^+} + \ket{2k\!+\!\half^+}\bra{2k\!+\!\half^+} \nonumber \\
&&\qquad+ \ket{2k\!+\!1^-}\bra{2k\!+\!1^-}+ \ket{2k\!+\!\half^-}\bra{2k\!+\!\half^-}\bigr)\nonumber\\
&&\quad+ \phi\sum_{k=1}^\infty \left( \ket{2k\!-\!\half^+}\bra{2k\!-\!\half^+} + \ket{2k^-}\bra{2k^-}\right), \nonumber \\
&&\mm{S}^\z = \sum_{k=1}^\infty \bigl(\ket{2k\!-\!1^+}\bra{2k\!-\!1^+} + \ket{2k\!-\!\half^+}\bra{2k\!-\!\half^+} \nonumber \\
&&\qquad + \ket{2k^-}\bra{2k^-} + \ket{2k\!+\!\half^-}\bra{2k\!+\!\half^-}\bigr)  \nonumber\\
&&\quad + \phi\sum_{k=0}^\infty\left(\ket{2k\!+\!\half^+}\bra{2k\!+\!\half^+} + \ket{2k\!+\!1^-}\bra{2k\!+\!1^-}\right), \nonumber
\end{eqnarray}
where $\phi\in\CC$ is a free parameter. We define an operator $\mm{G}\in{\rm End}({\cal H}_{ a}\otimes {\cal H}_{ p})$ which interchanges spins of different ladders in the physical space, i.e., $\mm{G} \sigma^{1, s} \mm{G} = \sigma^{2,s}$, $\mm{G} \sigma^{2, s} \mm{G} = \sigma^{1,s}$ and in the auxiliary space operates as $\mm{G}\ket{k^\pm} := \ket{k^\pm}$, $\mm{G} \ket{k\!+\!\half^\pm} := \ket{k\!+\!\half^\mp}$, $k \in \ZZ^+$, $\mm{T}$ is given by $\mm{T}=\mm{G} \mm{S} \mm{G}$. 

The operator $\mm{X}$ is given by the block diagonal matrix,
\begin{eqnarray}
\mm{X} &=& \ket{0^+}\bra{0^+} +
 \sum_{k=1}^\infty (-1)^{k}\!\!\!\sum_{\nu,\nu'\in\{-,+\}}   \ket{k^{\nu}} X^{\nu,\nu'}_k  \bra{k^{\nu'}}  \label{eq:Xansatz}\\
&+& w \sum_{k=0}^\infty (-1)^k \left(\ket{k\!+\!\half^+}\bra{k\!+\!\half^+} + \nonumber
\ket{k\!+\!\half^-}\bra{k\!+\!\half^-}\right), 
\end{eqnarray}
where $X_k=\{ X^{\nu,\nu'}_k \}_{\nu,\nu'\in\{-,+\}}$ are $2\times2$ matrices 
\begin{equation}
X_k(\phi,w)=\begin{pmatrix}
-(w+k u)w & 1-(w+k u)w(1-\phi^2)\cr
 -k u w & 1- k u w (1-\phi^2),
\end{pmatrix}.
\label{Xk}
\end{equation}
and $\phi,w \in \CC$ are free parameters. 
The operator $\mm{Y}$ is purely diagonal,
\begin{equation}
\mm{Y}=-2\phi u\sum_{k=0}^{\infty}\bigl(\ket{k^+}\bra{k^+} + \ket{k\!+\!1^-}\bra{k\!+\!1^-}\bigr).
\end{equation} 
Demanding that $\mm{X}$ is invertible (namely, $w\neq 0$, $\det X_k \neq 0$) we may write implicitly, 
\begin{eqnarray}
&&\acute{\mm{S}}^+\mm{X}  =  -2\sqrt{2}\sum_{k=1}^{\infty}(-1)^k X^{+-}_k \ket{k^-}\bra{k\!+\!\half^+}, \label{eq:Sb}\\
&&\acute{\mm{S}}^-\mm{X} =   -2\sqrt{2}\sum_{k=1}^{\infty} X^{-+}_k\ket{k^+}\bra{k\!-\!\half^-}, \nonumber\\
&&\mm{X}\grave{\mm{S}}^{+}  =  2\sqrt{2}\sum_{k=1}^{\infty}
(-1)^k X^{+-}_k\ket{k\!-\!\half^-}\bra{k^+} \nonumber\\
&&\mm{X}\grave{\mm{S}}^- =  -2\sqrt{2}\sum_{k=1}^{\infty}
X^{-+}_k \ket{k\!+\!\half^+}\bra{k^-}, \nonumber\\
&&\acute{\mm{S}}^0\mm{X}=\mm{X}\grave{\mm{S}}^0=
2\sum_{k=1}^{\infty}\bigl(w\ket{2k\!-\!1^+}\bra{2k\!-\!1^+}-w\ket{2k^-}\bra{2k^-} \nonumber \\
&&\quad - X^{++}_{2k-1}\ket{2k\!-\!\half^+}\bra{2k\!-\!\half^+}-X^{--}_{2k}\ket{2k\!-\!\half^-}\bra{2k\!-\!\half^-}\bigr) \nonumber \\
&&+2\phi
\sum_{k=0}^{\infty}
\bigl(-w \ket{2k^+}\bra{2k^+}+X^{--}_{2k+1}\ket{2k\!+\!\half^-}\bra{2k\!+\!\half^-}\bigr), \nonumber\\
&&\acute{\mm{S}}^\z\mm{X}=\mm{X}\grave{\mm{S}}^\z=
 2\sum_{k=0}^{\infty}\bigl(w \ket{2k\!+\!1^-}\bra{2k\!+\!1^-}-w\ket{2k^+}\bra{2k^+} \nonumber \\
 &&\quad + X^{++}_{2k}\ket{2k\!+\!\half^+}\bra{2k\!+\!\half^+} + X^{--}_{2k+1}\ket{2k\!+\!\half^-}\bra{2k\!+\!\half^-}\bigr) \nonumber \\
 && + 2\phi
\sum_{k=1}^{\infty}\bigl(w\ket{2k\!-\! 1^+}\bra{2k\!-\! 1^+}-X^{--}_{2k} \ket{2k\!-\!\half^-}\bra{2k\!-\!\half^-}\bigr),  \nonumber
\end{eqnarray}
and 
\begin{equation}
 \mm{G} \acute{\mm{S}} \mm{G} = \acute{\mm{T}},\quad
 \mm{G} \grave{\mm{S}} \mm{G} = \grave{\mm{T}}.\, \label{acutedef}
\end{equation}
It can then be directly shown that Eqs.~(\ref{id1}--\ref{id5}) are fulfilled and thus the Sutherland-Shastry equation \eqref{suthsha} holds. The solution to the equation for the NESS \eqref{NESS} proceeds in a similar fashion as in the XXZ case in Sect.~\ref{sec:XXZ}. The matrix product ansatz is given again in a factorized form as,
\begin{equation}
\rho_\infty = \frac{\Omega \Omega^\dagger K}{\tr \Omega \Omega^\dagger K}
\end{equation}
where $\Omega$ is again given by the highest weight state $\ket{0^+}$ expectation value of a monodromy matrix,
\begin{equation}
\Omega = \bra{0^+} \mm{L}_1(\phi,w)\mm{L}_2(\phi,w)\cdots  \mm{L}_n(\phi,w)\ket{0^+}
\label{SHub}
\end{equation}
and $K$ is a diagonal operator
\begin{equation}
K = K_1 K_2\ldots K_n,\quad K_n = \exp\left(\kappa (\sigma^{1,\z}_j+\sigma^{2,\z}_j)\right)
\end{equation}
with $\kappa = \half\log \Gamma_{\rm L}/\Gamma_{\rm R}$. The parameters $\phi,w$ are again determined with the boundary cancellation
mechanism and are given in terms of the system-bath coupling strengths as,
\begin{align}
\phi &= \frac{\Gamma_{\rm L}-\Gamma_{\rm R} - \ii (\mu_{\rm L}+\mu_{\rm R})}{\Gamma_{\rm L}+\Gamma_{\rm R} -
\ii (\mu_{\rm L}-\mu_{\rm R})}, \nonumber \\
w &=\frac{1}{4}\left(\mu_{\rm L}-\mu_{\rm R} + \ii\left(\Gamma_{\rm L}+\Gamma_{\rm R}\right)\right).
\label{Hubspectral}
\end{align}
It can be also shown that the fermion current scales asymptotically sub-diffusively as $~1/n^2$ for large $n$ \cite{PopkovProsen,ProsenHubbard,Prosenreview}. 

Curiously, the operator family $\Omega(\phi,w)$ which is generally non-diagonalizable again appears to be commuting $[\Omega(\phi,w),\Omega(\phi',w')] = 0$ and can be used to define novel conserved charges which break parity-hole symmetry of the Hubbard model (which is respected by all known local conserved charges).

\section{Stochastically boundary-driven bulk-deterministic classical systems}
\label{sec3}
Much like the previous section, in this section we will focus on the NESS of bulk deterministic system acted on by stochastic baths. The crucial difference is that, in this case, the bulk system will be classical. Remarkably, many similarities are present, likely reflecting a deeper connection between both classical and quantum integrable systems in this type of setup. We will begin by defining the model which we will later solve for both the NESS and a certain class of the decay modes (determining the finite-time dynamics of the system). In this section we present the results of \cite{ProsenMejia} and \cite{ProsenBuca}. 

\subsection{Deterministic classical cellular automata with stochastic boundaries}

The classical deterministic systems we will look at will be discrete space-time cellular automata. We focus on time-reversible cellular automata. The dynamics will occur on a diamond-shaped plaquette. Each site on the plaquette can take either a on (1) or off (0) value. The local dynamical rule is denoted $\chi : \ZZ_2\times \ZZ_2\times \ZZ_2 \to \ZZ_2$. Time runs in the north to south direction. This defines classical deterministic dynamics over a $1+1$ dimensional lattice $s_{x,t+1} = \chi(s_{x-1,t},s_{x,t-1},s_{x+1,t})$, $s_{x,t}\in \{0,1\}$, where only lattice sites $(x,t)$ of fixed, say {\em even} parity of $x\pm t$ are considered (the 1 dimensional lattice sites form a zig-zag pattern) where the size $n$ is assumed to be even. The micro-state of the system is thus given as a configuration along the saw patten, $\mathbf{s}=(s_1,s_2,\ldots,s_n) \equiv (s_{1,t+1},s_{2,t},s_{3,t+1},s_{4,t},\ldots,s_{n-1,t+1},s_{n,t})$.
The update procedure is a \emph{staggered} composition of updates of even sites 
\begin{equation}
s_{2 y,t+1}=\chi(s_{2y-1,t},s_{2y,t-1},s_{2y+1,t}),
\label{eeven}
\end{equation} and updates of odd sites 
\begin{equation}
s_{2y+1,t+2}=\chi(s_{2y,t+1},s_{2y+1,t},s_{2y+2,t+1}).
\label{eodd}
\end{equation} 
Further, we assume {\em reversibility}, i.e., $s_{\rm S} = \chi(s_{\rm W},s_{\rm N},s_{\rm E}) \Leftrightarrow s_{\rm N} = \chi(s_{\rm W},s_{\rm S},s_{\rm E})$ and
preservation of an empty state
$\chi(0,0,0)=0$.
This dynamics is thus a kind of discrete analog of second-order in time dynamics like Newton's laws of motion.

In this setup the time-evolution of the boundary sites $1,n$ is not determined. We introduce a stochastic update rule on the boundaries. In order to do that we first define a probability vector space ${\cal S} = (\RR^2)^{\otimes n}$ of probability distribution vectors $\mathbf{p}=(p_0,p_1,\ldots,p_{2^n-1}) \in {\cal S}$ of all the possible $2^n$ configurations (microstates) of the 1-dimensional pattern. 
The local rule 54 can then be given in terms of a three-site $2^3 \times 2^3$ permutation matrix $P$
\begin{equation}
P_{(s,s',s''),(t,t',t'')} = \delta_{s,t} \delta_{s',\chi(t,t',t'')} \delta_{s'',t''}, 
\end{equation}
where $\delta_{i,j}$ is the Kronecker delta. Note that this dynamics is deterministic. The local time evolution is embedded into ${\rm End}({\cal S})$ as the propagator $P_{k,k+1,k+2} = \one_{2^{k-1}}\otimes P \otimes \one_{2^{n-k-2}}$ acts on a triple of neighboring sites $k, k+1, k+2$. We introduce stochasticity on the boundaries through the following operators acting only on sites $1,2$ (the leftmost sites) and sites $n-1,n$ (the rightmost sites), 
\be
P^{\rm{L}}_{12}=P^{\rm{L}} \otimes \one_{2^{n-2}}, \qquad P^{\rm{R}}_{n-1,n}= \one_{2^{n-2}} \otimes P^{\rm{R}},
\ee
where $P^{\rm{L}}$ and $P^{\rm{R}}$ are stochastic matrices which map probability vectors to probability vectors (these matrices have non-negative elements and which in each column sum to 1 and thus conserve both the total probability and positivity of the probability vectors). The full time evolution of a probability vector is given via a Markov propagator $U$,
\begin{equation}
\mathbf{p}(t)=U^t \mathbf{p}(0),
\end{equation}
where $\mathbf{p}(0)$ is the initial probability distribution. The Mar\-kov propagator is split into two parts,
\be
U=U_{\rm o} U_{\rm e}. \label{U}
\ee
These are the two temporal layers which generate the previously discussed staggered dynamics (\ref{eeven},\ref{eodd}) for even and odd sites separately,
\begin{align}
&U_{\rm e}=P_{123}P_{345}\cdots P_{n-3,n-2,n-1} P^{\rm{R}}_{n-1,n}, \label{Ue}\\
&U_{\rm o}= P^{\rm{L}}_{12} P_{234} P_{456}  \cdots P_{n-2,n-1,n}. \label{Uo}
\end{align}
One may write the eigenvalue equation for the Markov propagator,
\begin{equation}
U \mathbf{p}=\Lambda \mathbf{p}, \label{eigeneq}
\end{equation}
which can be split into two coupled eigenvalue equations for the even and odd parts of the propagator,
\begin{equation}
U_{\rm e} \mathbf{p}=\Lambda_{\rm{L}} \mathbf{p'}, \qquad U_{\rm o} \mathbf{p'}=\Lambda_{\rm{R}} \mathbf{p}, \qquad \Lambda=\Lambda_{\rm{R}} \Lambda_{\rm{L}}. \label{eigensplit}
\end{equation}
For a graphical illustration of the above discussion see Fig.~\ref{fig:CA}.
\begin{figure}
 \centering	
\vspace{-1mm}
\includegraphics[width=0.8\columnwidth]{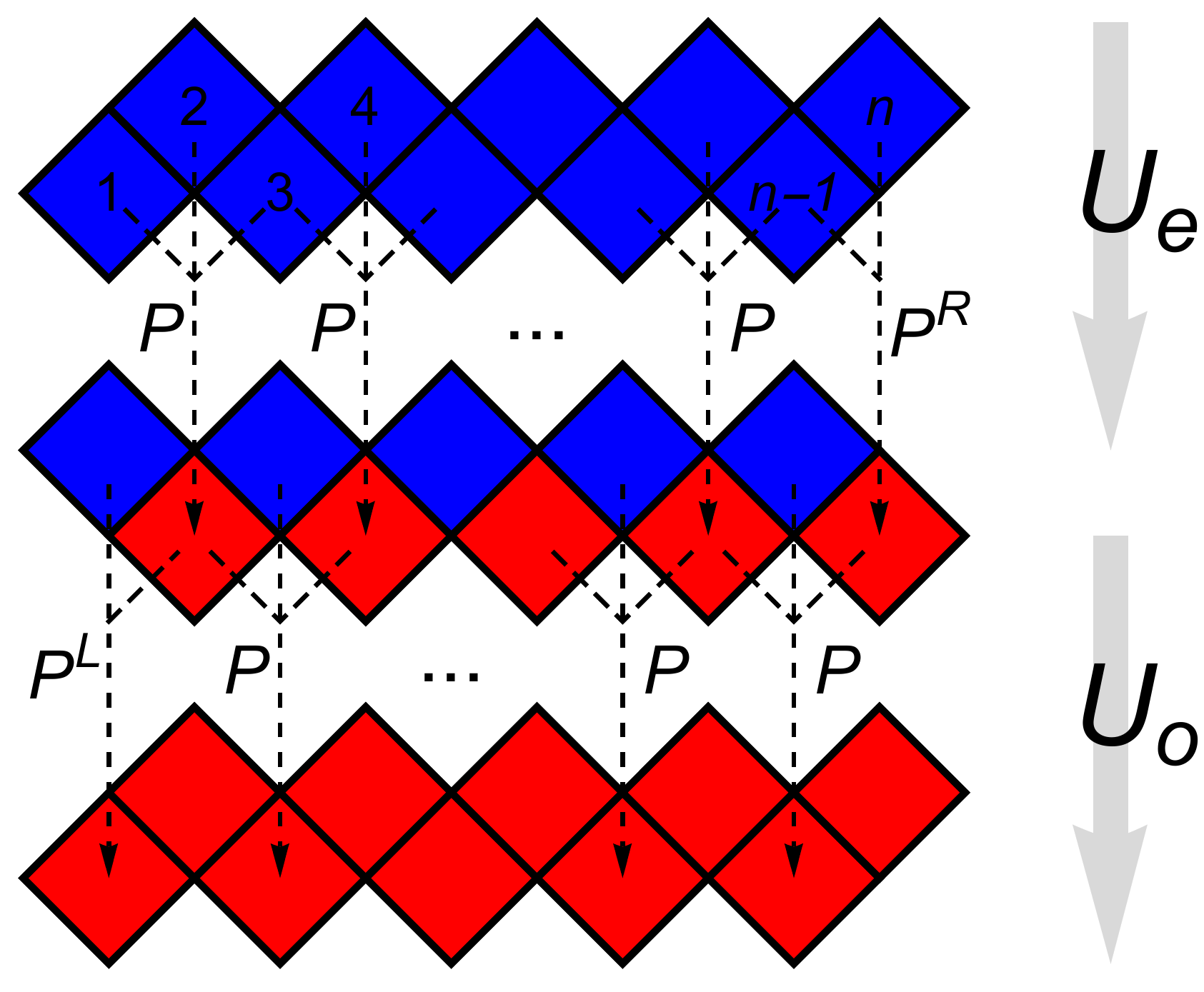}
\vspace{-1mm}
\caption{(From \cite{ProsenBuca}) A graphical explanation of the propagator \eqref{U}, composed of two split (time layers) $U_{\rm e}$ and $U_{\rm o}$ propagators, where each time layer is in turn composed of mutually commuting three site local permutation maps $P$ and two site boundary stochastic maps $P^{\rm L,R}$ (see Eqs.~(\ref{Ue},\ref{Uo})). Blue/red cells denotes the sites before/after the updates.}
\label{fig:CA}
\end{figure}

In terms of these eigenvectors the time-evolution can also be written as,
\be
\mathbf{p}(t) = \mathbf{p}_0 + \sum_{j\ge 1} c_j \Lambda_j^t \mathbf{p}_j, \label{timeeigvecs}
\ee 
where $c_j$ are constant given by the initial distribution $\mathbf{p}(0)$. We clearly see in Eq. \eqref{timeeigvecs} that eigenvector $\mathbf{p}_0$ is the NESS, as it does not decay in time. We also see that rest of the eigenvectors $\mathbf{p}_j$ ($j \ge 1$) decay exponentially with rate $\Lambda_j$. Any initial state $\mathbf{p}(0)$ will relax to NESS if all multiplicative rates are bounded away from unit circle $|\Lambda_j| <1$. 

A basic question is when is the NESS unique and approached from an arbitrary initial state, that is when is the time evolution ergodic and mixing. The answer to this question is provided by what we call \emph{holographic ergodicity} first studied for an integrable automaton in \cite{ProsenMejia}, but generally defined below for the first time for arbitrary `asymptotically abelian' bulk dynamics.

\subsection{Holographic ergodicity}
\label{sec:holo}
Before we explain what we mean by holographic ergodicity let us first state some concepts from linear algebra in relation to Perron-Frobenius theory \cite{matrixbook}. 
\begin{definition}[Irreducibility] A square matrix $U$ is call\-ed irreducible if for any pair configurations $\mathbf{s}, \mathbf{s}'$ one can find a natural number $t_0 \in \NN$ such that $(U^{t_0})_{\mathbf{s}, \mathbf{s}'}>0$. In other words, if $U$ connects every configuration with every other configuration after some finite number of time steps. \label{irr}
\end{definition}
\begin{definition}[Aperiodicity] A non-negative square matrix $U$ is called aperiodic if for some configuration $\mathbf{s}$ the greatest common divisor of repetition times $t \in\NN$ (for which $(U^{t_0})_{\mathbf{s}, \mathbf{s}}>0$, i.e., the same configuration repeats with finite probability) is 1. \label{ape} 
\end{definition}

It may be intuitive that these conditions are sufficient for the NESS to exist for any initial state and be unique: the first condition (irreducibility) ensures that the time evolution does not start and stay in one proper subspace, and the second condition (aperiodicity) ensures that the time evolution does not get stuck repeating periodically the same state instead of converging to one state (NESS).  

Indeed, if $U$ satisfies both of these conditions, then the Perron-Frobenius theorem \cite{matrixbook} guarantees that the NESS (eigenvector of $U$ with eigenvalue 1) is unique and all the other eigenvalues of $U$ lie inside the unit circle in the complex plane. 
A general discussion of the method of using this for classical boundary driven cellular automata follows. See the next section, Sect.~\ref{sec:rule54} for an example. 

\begin{theorem}[Holographic ergodicity] 
Suppose the following conditions hold. (i) The boundary stochastic processes are such that both states of the boundary cells are accessible
from any configuration
\begin{align}
P^{\rm L}_{(s,s'),(t,s')} > 0, 
\quad
P^{\rm R}_{(s',s),(s',t)} > 0,
\quad
\forall s,t,s'.
\end{align}
(ii) The bulk deterministic dynamics is asymptotically abe\-lian: considering the automaton defined on the infinite lattice $\ZZ$ and a pair of finite domains ${\cal A},{\cal B}\subset\ZZ$ such that the initial configuration is supported in ${\cal A}$, i.e. $s_{x,t}=0$, for all $x\not\in{\cal A},t\in\{0,1\}$, then $\exists t^*$, such that $s_{x,t}=0$, 
for all $x\in{\cal B}$, $t > t^*$.
Then, the matrix $U$ is aperiodic and irreducible, i.e. the corresponding Markov chain is ergodic and mixing.
\end{theorem}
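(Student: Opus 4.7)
The plan is to establish irreducibility and aperiodicity separately, routing both through the all-zero configuration $\mathbf{0}$. The unifying idea is to use (ii) to construct a finite-time deterministic trajectory on the infinite lattice that does the required connecting, and then to mirror that trajectory on the finite lattice by using (i) to condition the stochastic boundary maps $P^{\rm L}, P^{\rm R}$ to output exactly the values that the infinite-lattice evolution produces at sites $1$ and $n$.

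For the first half of irreducibility, that every $\mathbf{s}\in\{0,1\}^n$ reaches $\mathbf{0}$ with positive probability, I would embed $\mathbf{s}$ on $\ZZ$ with zero padding outside $[1,n]$ and apply (ii) with $\mathcal{A}=[1,n]$ and $\mathcal{B}=[1,n]$ to obtain a time $t^\ast$ at which the deterministic infinite-lattice automaton vanishes on $[1,n]$. The interior bulk updates $P_{k,k+1,k+2}$ on the finite lattice coincide automatically with those of the infinite lattice (same rule $\chi$ on the same local neighborhoods), so the only sub-steps that need active intervention are the stochastic overwrites at the edges; by (i) each of $P^{\rm L}$ and $P^{\rm R}$ can produce the requested infinite-lattice value at site $1$ or $n$ with strictly positive probability. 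Induction over the $t^\ast$ time steps then yields a positive-probability finite-lattice path from $\mathbf{s}$ to $\mathbf{0}$.

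For the second half, that $\mathbf{0}$ reaches every $\mathbf{s}'$, I would use reversibility. Since $s_N=\chi(s_W,s_S,s_E)$ is the same local rule as the forward one, the backward infinite-lattice evolution also satisfies (ii). Placing $\mathbf{s}'$ with zero padding at a sufficiently large infinite-lattice time $t^\ast$ and running backward to time $0$ therefore produces an infinite-lattice configuration that vanishes on $[1,n]$ at time $0$; by reversibility, the forward evolution of this configuration returns to $\mathbf{s}'$ on $[1,n]$ at time $t^\ast$. Mirroring this forward trajectory on the finite lattice by the same boundary-matching argument as before gives a positive-probability path $\mathbf{0}\to\mathbf{s}'$. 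Concatenating the two halves yields $(U^{2t^\ast})_{\mathbf{s}',\mathbf{s}}>0$ for every pair $\mathbf{s},\mathbf{s}'$, which is irreducibility.

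Aperiodicity then follows cheaply: $\chi(0,0,0)=0$ makes $\mathbf{0}$ a fixed point of every bulk three-site map $P_{k,k+1,k+2}$, while (i) allows $P^{\rm L}$ and $P^{\rm R}$ to leave the boundary cells at $0$ with strictly positive probability, so $(U)_{\mathbf{0},\mathbf{0}}>0$; thus the greatest common divisor of return times to $\mathbf{0}$ is $1$, and by irreducibility aperiodicity extends to the whole chain, after which Perron--Frobenius concludes ergodicity and mixing. The delicate step is the boundary-matching argument used in both irreducibility halves: one has to track the four serial sub-steps of $U=U_{\rm o}U_{\rm e}$ and verify that the stochastic choice at site $1$ or $n$ can be aligned with the infinite-lattice value before the next bulk layer consumes it. The saving observation is that, by (i), $P^{\rm R}$ only overwrites site $n$ and commutes with the even bulk layer --- whose rightmost map $P_{n-3,n-2,n-1}$ does not touch site $n$ --- and symmetrically $P^{\rm L}$ commutes with the odd bulk layer; hence at each sub-step the required stochastic alignment can be performed independently and with strictly positive probability.
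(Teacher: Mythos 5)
Your proposal is correct and follows essentially the same route as the paper's own proof: decay an arbitrary configuration to the vacuum via asymptotic abelianness with ${\cal A}={\cal B}=\{1,\ldots,n\}$, build the target configuration from the vacuum by time-reversal, and use condition (i) to realize the required boundary-cell values with positive probability at each sub-step. Your explicit treatment of the boundary-matching commutators and of aperiodicity via the vacuum self-loop merely spells out details the paper leaves implicit.
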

\begin{proof}
Consider ${\cal A}={\cal B} = \{1,2,\ldots,n\}$.
Then for an arbitrary initial configuration $\mathbf{s}\in \ZZ_2^n$ defining initial-saw data $s_{x,t} = s_x$ for $x\in {\cal A}$, $t = x\, {\rm mod}\, 2$ and $s_{x,t}=0$ otherwise,
we have $t^*_{\mathbf{s}} \in \NN$, such that $s_{x,t} = 0$ for $1\le x \le n$ and $t \ge t^*_{\rm s}$. This means that a choice of boundary conditions
$\{s_{1,t}, t\; {\rm odd}\}$ for the left end and $\{s_{n,t}, t\; {\rm even}\}$ for the right end, due to property (i) always provide a walk in the Markov graph, i.e. a process with a finite probability which connects a microstate $\mathbf{s}$ with a vacuum $(0,0,\ldots,0)$ with a finite probability,
$(U^t)_{(0,0,\ldots,0),\mathbf{s}} > 0$, for $t \ge t^*_{\rm s}$.
Similarly we show that an arbitrary final configuration $\mathbf{s}'\in\ZZ_2^n$ can be created from a vacuum configuration in time $t^*(\mathbf{s}')$ by applying
time-reversal. Consequently, $(U^t)_{\mathbf{s}',\mathbf{s}} > 0$ for any $t \ge t^*_{\mathbf{s}}+t^*_{\mathbf{s}'}$, hence we have shown irreducibity and aperiodicity of $U$.
\end{proof}
 
For a diagramatic illustration of the proof see Fig.~\ref{IrredChart}. We note that our discussion of boundary driven reversible cellular automata so far straightforwardly generalize to more than 2-state cells.

\begin{figure}
 \centering	
\vspace{-1mm}
\includegraphics[width=0.47\columnwidth]{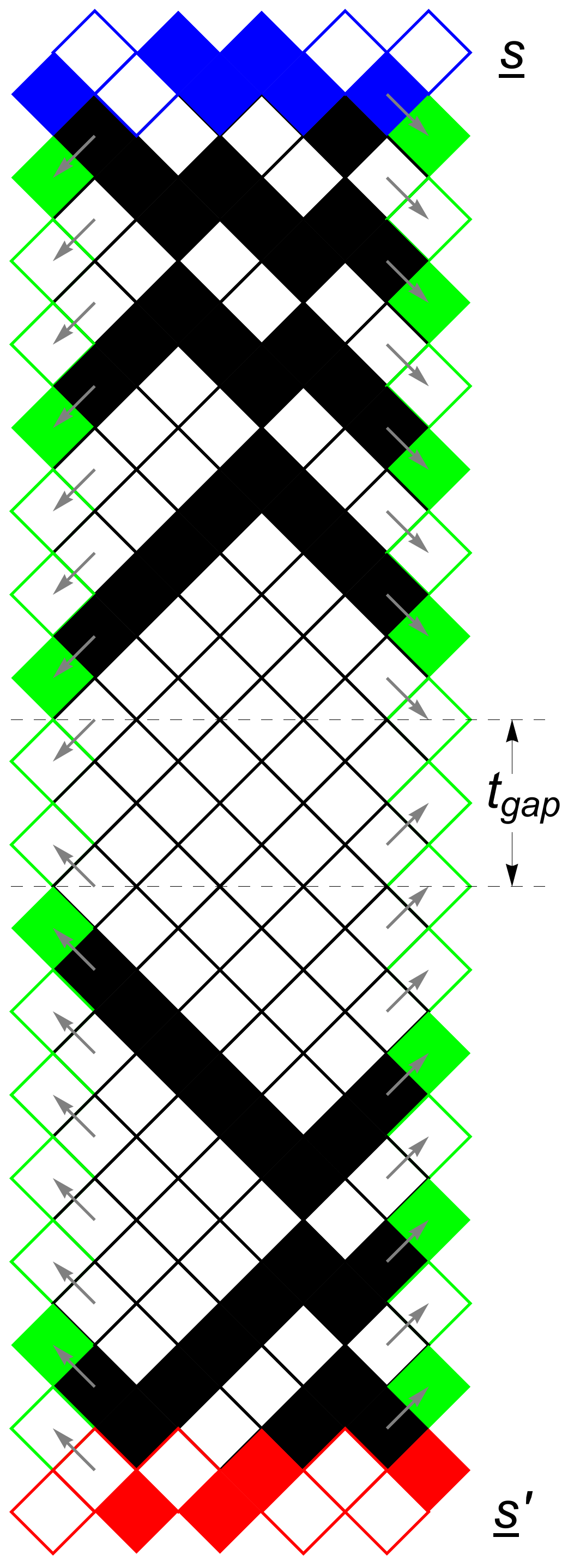}
\vspace{-1mm}
\caption{(From \cite{ProsenMejia}) Illustration of the proof of holographic ergodicity for the Rule 54 model. Two examples for configurations, 
$\mathbf{s}=(1,0,0,1,1,1,1,0,1,0)$ (blue) and $\mathbf{s}'=(0,0,1,0,1,1,0,0,0,1)$ (red), are connected with a walk for a generic value of probabilities 
$0<\alpha,\beta,\gamma,\delta < 1$ in at least $t_0=15$ time steps. The boundary cells are chosen and are shown as green cells. The values of the boundary cells are simply copies of the values of the cell in the direction indicated by grey arrows. Consequently $(U^{t_f+t_{gap}})_{\mathbf{s},\mathbf{s}'} > 0$ for any $t_{gap} \ge 0$ hence $U$ is irreducible and aperiodic (Defs.~\ref{irr},~\ref{ape}).}
\label{IrredChart}
\end{figure}

\subsection{The NESS and decay modes for the boundary driven Rule 54 cellular automaton}
\label{sec:rule54}
In this subsection we will focus on finding the solution for the Rule 54 automaton of Ref.~\cite{bobenko} with stochastic boundary driving \cite{ProsenMejia,ProsenBuca}. We will show how to construct the NESS in terms of a staggered matrix product ansatz and prove that it is unique. Later we will show how to deform this solution to obtain a class of decay modes of the model. The Rule 54 is an integrable cellular automaton with solitons traveling at the same velocity, $\pm 1$ \cite{bobenko}. The dynamics is quite simple -- the solitons travel either left or right and can scatter and incur a constant phase shift (see Ref.~\cite{bobenko}).

We begin by defining the model. A south site $s_{\rm S}$ is determined by a north, west and east sites according to the following rule (see Fig.~\ref{bobenkorule}),
\begin{align}
&s_{\rm S} = \chi({s_{\rm W},s_{\rm N},s_{\rm E}}) = s_{\rm N} + s_{\rm W} + s_{\rm E} +  s_{\rm W} s_{\rm E} \pmod{2}, \nonumber \\ 
&s_{\rm S},s_{\rm N},s_{\rm W},s_{\rm E} \in \ZZ_2.\label{chi}
\end{align}

\begin{figure}
 \centering	
\vspace{-1mm}
\includegraphics[width=\columnwidth]{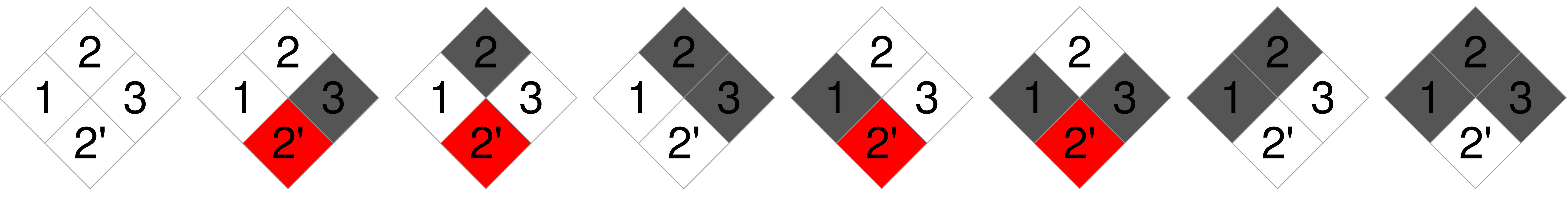}
\vspace{-1mm}
\caption{(From \cite{ProsenBuca}) The local rule 54. Each site can be either in state 0 or 1. State 1 is shown as dark grey at the current time step, red in the subsequent time step and 0 is always shown as white. The state of site 2 at the subsequent time step (labeled as 2') depends on the state of sites 1, 2, and 3 at the current time step. The full model is a combination of these plaquettes into a diamond-shaped lattice. It can be seen \cite{bobenko} that a combination of these plaquettes leads to solitons that can only scatter and incur a phase shift.}
\label{bobenkorule}
\end{figure}

This local time evolution rule can be encoded in the following permutation operator,
$$
P = \begin{pmatrix} 
\, 1\, & & & & & & & \cr
& & & \, 1\, & & & & \cr
& & \, 1\, & & & & & \cr
& \,1\, & & & & & & \cr
& & & & & & \,1\, & \cr
& & & & & & & \,1\, \cr
& & & & \,1\, & & & \cr
& & & & & \,1\, & & \end{pmatrix}.
$$
We take the following \emph{conditional} boundary driving\footnote{A solution with another type of driving is known \cite{ProsenMejia,ProsenBuca}, but it is conceptually similar to this one so for ease of presentation we omit discussing it here.},
\begin{align}
&P^{\rm{L}}=\left(
\begin{array}{cccc}
 \alpha  & 0 & \alpha  & 0 \\
 0 & \beta  & 0 & \beta  \\
 1-\alpha  & 0 & 1-\alpha  & 0 \\
 0 & 1-\beta  & 0 & 1-\beta  \\
\end{array}
\right),& \label{leftdriving} \\
&P^{\rm{R}}=\left(
\begin{array}{cccc}
 \gamma  & \gamma  & 0 & 0 \\
 1-\gamma  & 1-\gamma  & 0 & 0 \\
 0 & 0 & \delta  & \delta  \\
 0 & 0 & 1-\delta  & 1-\delta  \\
\end{array}
\right),&  
\label{rightdriving}
\end{align}
where $\alpha, \beta, \gamma, \delta \in [0,1]$ are driving rates parametrizing the left and the right bath. We call this driving conditional because in $P^{\rm{L}}_{12}$ ($P^{\rm{R}}_{n-1,n}$) the probability of changing the site 1 depends {\em only} on the state of the neighboring site 2 (changing the site $n$ depends only on the state of the site $n-1$).

We note that for an open set of boundary parameters, $0< \alpha,\beta,\gamma,\delta < 1$, the condition (i) of the holographic ergodicity theorem is satisfied, while the condition (ii) (asymptotic abelianess) is a simple consequence of the conserved solition currents \cite{bobenko} and the fact that solitions of Rule 54 
can not form bound states. See Fig.~\ref{IrredChart} for a graphic illustration of holographic ergodicity in the Rule 54 model.

\subsubsection{The matrix product ansatz for the NESS}

We will first endeavor to solve the equation for eigenvalue equation and find the NESS \eqref{eigeneq} with eigenvalue $\Lambda=1$ using a matrix product ansatz. To do this we will again introduce an auxiliary space ${\cal H}_a$. Let us first fix some notation. Vectors in the physical space will written in bold-face, as before for $\mathbf{p}(t)$. The number subscript of a physical space vector (or operator) denotes the site position in physical space on which it acts. When using component notation the components in physical space will be label with a binary `spin' index, i.e., $s\in\{0,1\}$. Matrices are denoted by capital roman letters. Row (column) vectors in the auxiliary space will be Dirac bras (kets). 

We begin by taking a \emph{staggered} matrix product ansatz for the split eigenvalue equation \eqref{eigensplit},
 \begin{align}
&\mathbf{p}=\bra{\mathbf{l}_1}\mmu{W}{2}\mmup{W}{3} \mmu{W}{4} \mmup{W}{5}\cdots\mmup{W}{n-3}\mmu{W}{n-2}\ket{\mathbf{r}_{n-1,n}}, \label{pness} \\
&\mathbf{p'}=\bra{\mathbf{l}'_{12}}\mmu{W}{3}\mmup{W}4\mmu{W}{5}\mmup{W}{6}\cdots\mmup{W}{n-2}\mmu{W}{n-1}\ket{\mathbf{r}'_{n}}. \label{ppness}
\end{align}
Assume that there exist operators $\mm{W}$, $\mm{W}'$ and $S$ satisfying the following algebraic relation,
\be
P_{123} \mathbf{W}_1 S\, \mathbf{W}_2 \mathbf{W}'_3=\mathbf{W}_1 \mathbf{W}'_2 \mathbf{W}_3 S.  \label{bulk1}
\ee
This algebraic relation is, to our knowledge, the fundamental equation encoding the integrability of the model. Since $P^2=\one$ from this follows also another equation obtained by multiplying Eq. \eqref{bulk1} from the left with $P_{123}$. In addition to this assume that $\mm{W}$ and $\mm{W}'$ can be interchanged and another dual bulk relation follows,
\be
P_{123} \mathbf{W}'_1 \mathbf{W}_2 \mathbf{W}'_3 S = \mathbf{W}'_1 S\, \mathbf{W}'_2 \mathbf{W}_3.  \label{bulk1b}
\ee
Further assume that there exists a representation of $\mm{W}$, $\mm{W}'$ and $S$ such that there as well exists a non-trivial solution to the following boundary conditions,
 \begin{align}
&P^{\rm{L}}_{12} \bra{\mathbf{l}'_{12}}=\lambda_{\rm{L}} \bra{\mathbf{l}_1} \mmu{W}{2} S, \label{bound1a} \\
&P_{123} \bra{\mathbf{l}_{1}} \mmu{W}{2} \mmup{W}{3}=\bra{\mathbf{l}'_{12}} \mmu{W}{3}S,\label{bound2a} \\
&P^{R}_{12} \ket{\mathbf{r}_{12}}=\mmup{W}{1}S\ket{\mathbf{r}'_2}, \label{bound3a} \\
&P_{123} \mmup{W}{1} \mmu{W}{2}\ket{\mathbf{r}'_{3}} =\lambda_{\rm{R}} \mmup{W}{1}S\ket{\mathbf{r}_{23}}. \label{bound4a}
\end{align}
Another necessary assumption will be $S^2=\one$. 

If all of these assumptions hold then $\mathbf{p}$ and $\mathbf{p}'$ fulfil the split eigenvalue equation with corresponding eigenvalues \eqref{eigensplit} (where we wrote here for sake of specificity as $\Lambda_{\rm{L,R}}=\lambda_{\rm{L,R}}$).

This can be seen by the following argument. First fully writing out $U_{\rm e} \mathbf{p}$ in terms of (\ref{Ue}) and the ansatz \eqref{pness}, one may use Eq.~\eqref{bound3a} in order to introduce the operator $S$ in a string 
$\cdots \mathbf{W}'_{n-3}\mathbf{W}_{n-2}\mathbf{W}'_{n-1}S$ (inside $\mathbf{p}$) and then one uses $\cdots P_{n-5,n-4,n-3}P_{n-3,n-2,n-1}$ via the dual bulk relation \eqref{bulk1b} in order to move the operator $S$ across to the left end. There it is then absorbed, via $S^2=\one$, by boundary equation \eqref{bound2a}. We thus finally get $U_{\rm e} \mathbf{p}=\lambda_{\rm R}\mathbf{p}'$.

Analogously, we proceed with $U_{\rm o} \mathbf{p}'$, in terms of (\ref{Uo}) and ansatz \eqref{ppness}, and now use the boundary Eqs.~(\ref{bound1a},\ref{bound4a}) to carry $S$
from left to right via the bulk relation \eqref{bulk1}, finishing with $U_{\rm e} \mathbf{p}'=\lambda_{\rm L}\mathbf{p}$.

We give a simple example of this cancellation mechanism for $n=6$,
\begin{align}
&U_{\rm o} U_{\rm e} \mathbf{p}=\nonumber \\
&U_{\rm o} P_{123}P_{345} P^{\rm{R}}_{5,6} \bra{\mathbf{l}_1}\mmu{W}{2}\mmup{W}{3} \mmu{W}{4} \ket{\mathbf{r}_{5,6}}= \nonumber \\
&U_{\rm o} P_{123}P_{345} \bra{\mathbf{l}_1}\mmu{W}{2}\mmup{W}{3} \mmu{W}{4} \mmup{W}{5}S\ket{\mathbf{r}'_6}=\nonumber \\
&U_{\rm o} P_{123} \bra{\mathbf{l}_1}\mmu{W}{2}\mathbf{W}'_3 S \mathbf{W}'_4 \mathbf{W}_5\ket{\mathbf{r}'_6}= \nonumber \\
&U_{\rm o} \bra{\mathbf{l}'_{12}} \mmu{W}{3}S^2 \mathbf{W}'_4 \mathbf{W}_5\ket{\mathbf{r}'_6}=U_{\rm o} \mathbf{p'}
\end{align}

There indeed exists a $4$-dimensional representation of $\mathbf{W}$, $\mathbf{W'}$ and an $S$ with the aforementioned properties. It was found by numerical experimentation \cite{ProsenBuca}. We write them out,
\be
W_0(\xi, \omega)=\left(
\begin{array}{cccc}
 1 & 1 & 0 & 0 \\
 0 & 0 & 0 & 0 \\
 \xi  & \xi  & 0 & 0 \\
 0 & 0 & 0 & 0 \\
\end{array}
\right), \qquad 
W_1(\xi, \omega)=\left(
\begin{array}{cccc}
 0 & 0 & 0 & 0 \\
 0 & 0 & \xi  & 1 \\
 0 & 0 & 0 & 0 \\
 0 & 0 & 1 & \omega  \\
\end{array}
\right),
\label{Wrepr}
\ee
noting that they depend on a pair of {\em spectral parameters}\footnote{The fact that we have free parameters in a representation is in general important for solving a matrix product ansatz, as this allows for non-trivial solutions to the boundary equations.} $\xi$ and $\omega$.
The operator $\mathbf{W}'$ is just $\mathbf{W}$ with the parameters $\xi$ and $\omega$ interchanged, i.e., (writing the dependence on the spectral parameters explicitly), 
\be
\mathbf{W}'(\xi, \omega)=\mathbf{W}(\omega,\xi),
\ee
and,
\be
S=\one \otimes \sigma^{\rm{x}}=\left(
\begin{array}{cccc}
 0 & 1 & 0 & 0 \\
 1 & 0 & 0 & 0 \\
 0 & 0 & 0 & 1 \\
 0 & 0 & 1 & 0 \\
\end{array}
\right).
\label{Sdef}
\ee
Using this representation we may solve the boundary equations for $\xi, \omega$ and find the split eigenvectors \eqref{pness}, \eqref{ppness} and eigenvalue equations \footnote{For explicit expressions of the boundary vectors $\ket{\mathbf{l}_1},\ket{\mathbf{r}_{n-1,n}}, \ldots$ the reader is referred to \cite{ProsenBuca}.} The eigenvalue equations are,
\begin{align}
&\frac{(\alpha +\beta -1)-\beta  \lambda_{\rm{L}}}{(\beta -1) \lambda_{L}^{2}}=\frac{\lambda_{\rm{R}} (\gamma -\lambda_{\rm{R}})}{(\delta -1) }, \\
&\frac{\lambda_{\rm{L}} (\alpha -\lambda_{\rm{L}})}{(\beta -1)}=\frac{ (\gamma +\delta -1)-\delta  \lambda_{\rm{R}}}{(\delta -1) \lambda_{\rm{R}}^2}.
\end{align}
Rewriting these equations in terms of the eigenvalue $\lambda=\lambda_{\rm{L}} \lambda_{\rm{R}}$ leads to
\begin{align}
&(\lambda-1)\times \label{char1}\\
&\Big(\lambda ^3+\lambda ^2 (1-\alpha  \gamma )+\lambda  [\beta  \delta -(\alpha +\beta -1) (\gamma +\delta -1)] \nonumber\\
&-(\alpha +\beta -1) (\gamma +\delta -1)\Big)=0. \nonumber
\end{align}
Clearly, $\lambda=1$ is always the solution and corresponds to the NESS. What remains is a cubic polynomial and we have three other solutions which correspond to three decay modes (whose eigenvalues do not depend on system size $n$). We call collectively these four eigenvalues the \emph{NESS orbital}.
Using this solution it can be shown that the current of both left- and right-moving solitons is ballistic in the NESS \cite{ProsenMejia}. 

We will now show how to deform this solution in order to obtain another orbital of solutions containing the second longest lived (apart from the NESS) Liouvillian eigenmode (called the \emph{leading decay mode}).

\subsection{The first (leading decay mode) orbital}
 
The physical intuition behind the generalization of the results from the previous section to another orbital is based on the spirit of the standard coordinate Bethe ansatz \cite{bethe1,bethe2,bethe3} generalized to the coordinate matrix product ansatz picture for the ASEP \cite{ASEPMatrix}. In this approach higher decay modes are understood as excitations upon the NESS which is understood as a vacuum. The leading decay mode is thus a superposition of one-particle excitations. 

The matrix product ansatz presented in this section was discovered through numerical experimentation in \cite{ProsenBuca}. It works as an exact analytic solution, but its generalization to all the decay modes, as well as a deeper understanding of its fundamental aspects, remain elusive. It will prove necessary to extend the $4$-dimensional from the previous section to an $8$-dimensional one. We therefore append notation from the previous subsection to include matrices in an extended 8-dimensional auxiliary space to be denoted with hats. We begin by generalizing the $\mathbf{W}$ operators to 8 dimensions,
\begin{align}
&\hat{W}_s= e_{11}\otimes W_s(\xi z, \omega/z)+ e_{22}\otimes W_s(\xi/ z, \omega z), \nonumber \\
& \hat{W}'_s=e_{11}\otimes W'_s(\xi z, \omega/z)+ e_{22}\otimes W'_s(\xi/ z, \omega z), \label{diagpart}
\end{align}
where $e_{i j} = \ket{i}\bra{j}$, $i,j\in \{1,2\}$ is the Weyl basis of $2\times 2$ matrices.  
We also introduce several new operators,
\begin{align}
&\hat{F}^{(k)}=\one_8+ e_{1 2} \otimes \frac{c_+ z^k F_+ +c_- z^{-k} F_-}{\xi \omega -1},  \nonumber \\
&\hat{F}'^{(k)}=\one_8+ e_{1 2} \otimes \frac{c_+ z^k F'_+ +c_- z^{-k} F'_-}{\xi \omega -1}, \nonumber \\
&\hat{G}^{(k)}=\one_8+ e_{1 2} \otimes \frac{c_+ z^k G_+ +c_- z^{-k} G_-}{\xi \omega -1},  \nonumber \\
& \hat{G}'^{(k)}=\one_8+ e_{1 2} \otimes \frac{c_+ z^k G'_+ +c_- z^{-k} G'_-}{\xi \omega -1}, \nonumber \\
&\hat{K}^{(k)}=\one_8+ e_{1 2} \otimes \frac{c_+ z^k K_+ +c_- z^{-k} K_-}{\xi \omega -1}, \nonumber\\
&\hat{L}^{(k)}=(z e_{11}+z^{-1} e_{22})\otimes \one_4+ e_{1 2} \otimes \frac{c_+ z^k L_+ +c_- z^{-k} L_-}{\xi \omega -1}, \label{FLhat} 
\end{align} 
which linearly depend on two amplitude parameters $c_+,c_-$. The parameter $z$ introduced here underlies the aforementioned physical intuition. These operators will essentially create left and right-moving excitation over the NESS with $z$ being related to the quasi-momentum of the quasi-particle excitation. 
We continue by deforming the bulk relations \eqref{bulk1} and \eqref{bulk1b},
\begin{align}
&P_{123}  \hat{K}^{(k-1)}\mathbf{\hat{W}}_1 \hat{S} \hat{G}^{(k)}\mathbf{\hat{W}}_2 \hat{F}^{(k+1)}\mathbf{\hat{W}}'_3\nonumber \\
&\;\;=\hat{F}'^{(k-1)}\mathbf{\hat{W}}_1 \hat{G}'^{(k)} \mathbf{\hat{W}}'_2  \hat{K}^{(k+1)}\mathbf{\hat{W}}_3 \hat{S}, \nonumber \\
&P_{123} \hat{G}'^{(k-1)}\mathbf{\hat{W}}'_1 \hat{F}'^{(k)} \mathbf{\hat{W}}_2  \hat{L}^{(k+1)}\mathbf{\hat{W}}'_3 \hat{S}\nonumber \\
&\;\;=\hat{L}^{(k-1)}\mathbf{\hat{W}}'_1 \hat{S} \hat{F}^{(k)}\mathbf{\hat{W}}'_2 \hat{G}^{(k+1)}\mathbf{\hat{W}}_3, \label{bulk2}
\end{align}
where $\hat{S}=\one_2 \otimes S$. 
Provided that a representation of $F_+, F_-,F'_+, \ldots$ exists such that there exists a non-trivial solution to the following boundary equations,
\begin{align}
&P^{\rm{L}}_{12} \bra{\mathbf{\hat{l}}'_{12}}=\Lambda_{\rm{L}} \bra{\mathbf{\hat{l}}_1} \hmp{G}0 \hmup{W}{2}, \label{bound1} \\
&P_{123} \bra{\mathbf{\hat{l}}_{1}} \hm{L}{0}\hmup{W}{2}\hat{S} \hm{F}{1}\hmup{W}{3}=\bra{\mathbf{\hat{l}}'_{12}} \hm{K}{1}\hmu{W}{3}\hat{S},\label{bound2} \\
&P^{R}_{12} \ket{\mathbf{\hat{r}}_{12}}=\hm{F}{n-3}\hmup{W}{1}\hat{S}\ket{\mathbf{\hat{r}}'_2}, \label{bound3} \\
&P_{123} \hm{G}{n-4}\hmup{W}{1} \hm{K}{n-3}\hmu{W}{2}\hat{S} \ket{\mathbf{\hat{r}}'_{3}} =\Lambda_{\rm{R}} \hm{L}{n-4}\hmup{W}{1}\hat{S}\ket{\mathbf{\hat{r}}_{23}}. \label{bound4}
\end{align}
Then, the following inhomogeneous (site-dependent) MPA
\begin{align}
 \mathbf{p}=&\bra{\mathbf{\hat{l}}_1}\hat{L}^{(0)}\WW'_2\hat{S}\hat{F}^{(1)}\mathbf{\WW}'_3\hat{G}^{(2)}\mathbf{\WW}_4 \cdots \nonumber \\
&\hat{F}^{(n-5)}\mathbf{\WW}'_{n-3}\hat{G}^{(n-4)}\mathbf{\WW}_{n-2}\ket{\mathbf{\hat{r}}_{n-1,n}}, \label{p} \\
  \mathbf{p'}=&\bra{\mathbf{\hat{l}}'_{12}}\hat{F}^{'(1)}\mathbf{\WW}_3\hat{G}^{'(2)}\mathbf{\WW}'_4\hat{F}^{'(3)}\mathbf{\WW}_5 \cdots \nonumber \\
&\hat{G}^{'(n-4)}\mathbf{\WW}'_{n-2}\hat{K}^{(n-3)}\mathbf{\WW}_{n-1}\hat{S}\ket{\mathbf{\hat{r}}'_{n}}, \label{pp}
\end{align}
gives an eigenvector of $U=U_{\rm e}U_{\rm o}$ (\ref{Ue},\ref{Uo}) with an eigenvalue $\Lambda=\Lambda_{\rm L}\Lambda_{\rm R}$.

This can be seen in the following way. First we note that, 
\be
[P^{\rm L}_{12},P_{234}]=0,\qquad
[P_{123},P^{\rm R}_{34}]=0.
 \label{boundcomm}
\ee
This commutativity condition allows us to use the following argument. First $U_{\rm e}$ acts on $\mathbf{p}$. The first operator in $U_{\rm e}$ acting is $P_{123}$ on the left boundary vector, which via \eqref{bound2} creates $\hm{K}{1}\hmu{W}{3}\hat{S}$. The subsequent $P$'s in $U_{\rm e}$ move $\hm{K}{1}\hmu{W}{3}\hat{S}$ to the right using the bulk algebra \eqref{bulk2}. Before the final $P_{n-3,n-2,n-1}$ acts, $P_{\rm{R}}$ acts (as it commutes with $P_{n-3,n-2,n-1}$ \eqref{boundcomm}) and it creates the operator $\hm{F}{n-3}\hmup{W}{1}\hat{S}$. This operator is needed for the final $P_{n-3,n-2,n-1}$ to move $\hm{K}{n-5}\hmu{W}{n-3}\hat{S}$ to the actual end as $\hm{K}{n-3}\hmu{W}{n-1}\hat{S}$, and thus finally creating $\mathbf{p'}$ \eqref{pp}. The action of the odd-part of the propagator $U_{\rm o}$ is analogous in reverse.

We give an example for $n=8$ and for the even part of the propagator $U_{\rm e}$ to illustrate this mechanism,
\begin{align}
&U_{\rm o}U_{\rm e} \mathbf{p}=\nonumber \\
&U_{\rm o} P_{123}P_{345}P_{567} P^{\rm{R}}_{7,8} \cdot \nonumber \\
&\bra{\mathbf{\hat{l}}_1}\hat{L}^{(0)}\WW'_2\hat{S}\hat{F}^{(1)}\mathbf{\WW}'_3\hat{G}^{(2)}\mathbf{\WW}_4 \hat{F}^{(3)}\mathbf{\WW}'_{5}\hat{G}^{(4)}\mathbf{\WW}_{6}\ket{\mathbf{\hat{r}}_{7,8}}\nonumber=\\
&U_{\rm o}P_{345}P_{567} P^{\rm{R}}_{7,8}\cdot \nonumber \\
&\bra{\mathbf{\hat{l}}'_{12}} \hm{K}{1}\hmu{W}{3}\hat{S}\hat{G}^{(2)}\mathbf{\WW}_4 \hat{F}^{(3)}\mathbf{\WW}'_{5}\hat{G}^{(4)}\mathbf{\WW}_{6}\ket{\mathbf{\hat{r}}_{7,8}}\nonumber=\\
&U_{\rm o}P_{567} P^{\rm{R}}_{7,8}\bra{\mathbf{\hat{l}}'_{12}} \hat{F}'^{(1)}\mathbf{\hat{W}}_1 \hat{G}'^{(2)} \mathbf{\hat{W}}'_2  \hat{K}^{(3)}\mathbf{\hat{W}}_3 \hat{S}\hat{G}^{(4)}\mathbf{\WW}_{6}\ket{\mathbf{\hat{r}}_{7,8}}\nonumber=\\
&U_{\rm o}P_{567} \cdot \nonumber \\
&\bra{\mathbf{\hat{l}}'_{12}} \hat{F}'^{(1)}\mathbf{\hat{W}}_1 \hat{G}'^{(2)} \mathbf{\hat{W}}'_2  \hat{K}^{(3)}\mathbf{\hat{W}}_3 \hat{S}\hat{G}^{(4)}\mathbf{\WW}_{6}\hm{F}{5}\hmup{W}{7}\hat{S}\ket{\mathbf{\hat{r}}'_8}\nonumber=\\
&U_{\rm o} \bra{\mathbf{\hat{l}}'_{12}} \hat{F}'^{(1)}\mathbf{\hat{W}}_1 \hat{G}'^{(2)} \mathbf{\hat{W}}'_2 \hat{F}'^{(3)}\mathbf{\hat{W}}_1 \hat{G}'^{(4)} \mathbf{\hat{W}}'_2  \hat{K}^{(5)}\mathbf{\hat{W}}_3 \hat{S}^2\ket{\mathbf{\hat{r}}'_8}= \nonumber \\
&U_{\rm o}  \mathbf{p'}.
\end{align}
Indeed, as before, a non-trivial representation has been found,
\begin{align}
&F_{+}=\left(
\begin{array}{cccc}
 0 & 0 & 0 & 0 \\
 0 & 0 & 0 & z \\
 0 & 0 & \frac{\xi  \omega -1}{\omega  z^2} & 0 \\
 0 & 0 & 0 & \xi  z^2 \\
\end{array} 
\right), \\
&F_{-}=\left(
\begin{array}{cccc}
 0 & 0 & 0 & 0 \\
 0 & 0 & 0 & \frac{1}{\xi ^2 z^3} \\
 0 & 0 & \frac{\xi  \omega -1}{\xi  z^2} & 0 \\
 0 & 0 & 0 & \omega +\frac{1}{\xi}\left(\frac{1}{z^2}-1\right) \\
\end{array}
\right), \nonumber \\
&F'_{+}=\left(
\begin{array}{cccc}
 0 & 0 & 0 & 0 \\
 0 & 0 & 0 & \frac{z^3}{\omega ^2} \\
 0 & 0 & \frac{z^2 (\xi  \omega -1)}{\omega } & 0 \\
 0 & 0 & 0 & \xi +\frac{z^2-1}{\omega } \\
\end{array}
\right),  \nonumber\\
&F'_{-}= \left(
\begin{array}{cccc}
 0 & 0 & 0 & 0 \\
 0 & 0 & 0 & \frac{1}{z} \\
 0 & 0 & \frac{z^2 (\xi  \omega -1)}{\xi } & 0 \\
 0 & 0 & 0 & \frac{\omega }{z^2} \\
\end{array} \nonumber 
\right), \nonumber \\
&G_{+}=\left(
\begin{array}{cccc}
 0 & 0 & 0 & 0 \\
 0 & 0 & 0 & 0 \\
 \xi ^2 & 0 & 0 & 0 \\
 -\frac{2 \xi }{z} & -1 & 0 & 0 \\
\end{array}
\right), \nonumber \\ 
&G_{-}=\left(
\begin{array}{cccc}
 0 & 0 & 0 & 0 \\
 0 & 0 & 0 & 0 \\
 \xi  \omega  & 0 & 0 & 0 \\
 -\frac{\xi  \omega  z^2+1}{\xi  z^3} & -\frac{\omega }{\xi  z^2} & 0 & 0 \\
\end{array}
\right), \nonumber \\
&G'_{+}= \left(
\begin{array}{cccc}
 0 & 0 & 0 & 0 \\
 0 & 0 & 0 & 0 \\
 \xi  \omega  & 0 & 0 & 0 \\
 -z^3 \left(\frac{\xi }{z^2}+\frac{1}{\omega}\right) & -\frac{\xi  z^2}{\omega } & 0 & 0 \\
\end{array}
\right),  \nonumber \\ 
&G'_{-}=\left(
\begin{array}{cccc}
 0 & 0 & 0 & 0 \\
 0 & 0 & 0 & 0 \\
 \omega ^2 & 0 & 0 & 0 \\
 -2 \omega  z & -1 & 0 & 0 \\
\end{array}
\right),\nonumber \\
&K_{+}=\left(
\begin{array}{cccc}
 0 & 0 & 0 & 0 \\
 0 & \xi  z^2 & 0 & 0 \\
 0 & 0 & \frac{z^2 (\xi  \omega -1)}{\omega } & 0 \\
 0 & z & 0 & 0 \\
\end{array}
\right),  \nonumber \\
& K_{-}=\left(
\begin{array}{cccc}
 0 & 0 & 0 & 0 \\
 0 & \omega +\frac{1}{\xi}\left(\frac{1}{z^2}-1\right) & 0 & 0 \\
 0 & 0 & \frac{z^2 (\xi  \omega -1)}{\xi } & 0 \\
 0 & \frac{1}{\xi ^2 z^3} & 0 & 0 \\
\end{array}
\right), \nonumber\\
&L_{+}=\left(
\begin{array}{cccc}
 0 & 0 & 0 & 0 \\
 0 & \xi  & 0 & -z \\
 \frac{\xi  \omega }{z} & 0 & 0 & 0 \\
 0 & 0 & -\frac{\xi  \omega +z^2}{\omega ^2 z} & -\frac{z^2}{\omega } \\
\end{array}
\right), \nonumber \\
&L_{-}=\left(
\begin{array}{cccc}
 0 & 0 & 0 & 0 \\
 0 & \frac{1}{\xi  z^2} & 0 & -\frac{\omega }{\xi  z} \\
 0 & 0 & \frac{\omega }{z^2} & 0 \\
 -2 \omega  & 0 & 0 & -\omega  \\
\end{array}
\right).\nonumber
\end{align}
Note that by setting $z=1, c_+=c_-=0$, we recover the original bulk algebra (\ref{bulk1},\ref{bulk1b}). The boundary equations Eqs.~(\ref{bound1}- \ref{bound4}) now are solved for $\xi,\omega,z,c_+,c_-\in\CC$ and $k\in\ZZ$ to obtain the eigenvectors and eigenvalues in the first orbital. Specifically, for the left end,
\begin{align}
&\xi=\frac{z (\alpha +\beta -1)-\beta  \Lambda_{\rm{L}}}{(\beta -1) \Lambda_{L}^{2}}, \label{xi1}\\
&\omega =\frac{\Lambda_{\rm{L}} (\alpha  z-\Lambda_{\rm{L}})}{(\beta -1) z}. \label{paraleft1}\\
&\frac{c_-}{c_+}=\frac{\Lambda_{L}^{4}}{z^4 (\alpha +\beta -1)}, \label{paraleft2}
\end{align}
 and for the right end,
 \begin{align}
&\xi=\frac{\Lambda_{\rm{R}} (\gamma  z-\Lambda_{\rm{R}})}{(\delta -1) z}, \label{xi2} \\
&\omega =\frac{z (\gamma +\delta -1)-\delta  \Lambda_{\rm{R}}}{(\delta -1) \Lambda_{\rm{R}}^2},\label{om2} \\
&\frac{c_+}{c_-}=\frac{\Lambda_{\rm{R}}^{4} z^{4 m+2}}{\gamma +\delta -1}, \label{pararight} 
\end{align}
where $m=\frac{n}{2}-2$. Pairwise identifying expressions for $\xi$, $\omega$ and $c_+/c_-$ from the left and right boundary equations, we obtain a triple 
of coupled equations for the unknowns $\Lambda_{\rm L},\Lambda_{\rm R}$ and $z$.

For details of the solution procedure and the form of the boundary vectors we refer the reader to \cite{ProsenBuca}. Using this solution it can be shown that the eigenvalue of the leading decay modes scales with large $n$ as $1/n$,
\begin{align}
&\Lambda_1(0)= \label{ldmth}\\ 
&\frac{[1-(\alpha +\beta -1) (\gamma +\delta -1)] \log [(\alpha +\beta -1) (\gamma+\delta -1)]}{2 (\alpha +\beta -1) (\gamma +\delta -1)+\alpha  \gamma -\beta  \delta-2}, \nonumber
\end{align}
which is consistent with the aforementioned ballistic transport \cite{ProsenMejia}. 
Based on these results a conjecture for the higher orbital eigenvalues was obtained. Let $p$ denote the orbital level which runs from $1$ to $m=n/2-2$,
\begin{align}
&\frac{z (\alpha +\beta -1)-\beta  \Lambda_{\rm{L}}}{(\beta -1) \Lambda_{L}^{2}}=\frac{\Lambda_{\rm{R}} (\gamma  z-\Lambda_{\rm{R}})}{(\delta -1) z}, \nonumber \\
& \frac{\Lambda_{\rm{L}} (\alpha  z-\Lambda_{\rm{L}})}{(\beta -1) z}=\frac{z (\gamma +\delta -1)-\delta  \Lambda_{\rm{R}}}{(\delta -1) \Lambda_{\rm{R}}^2},\nonumber \\
&(\alpha +\beta -1)^p(\gamma +\delta -1)^p=(\Lambda_{\rm{L}} \Lambda_{\rm{R}})^{4 p} z^{4(m-p)+2}, \label{conj}
\end{align}
while $p=1$ corresponds to a rigorous leading orbital obtained by combining Eqs. (\ref{xi1}--\ref{pararight}).
The results for the eigenvalues as solutions of Eqs. (\ref{conj}) are presented in Fig.~\ref{fig:orbitals} and are shown to agree perfectly with numerical diagonalization of the Markov matrix $U$.
\begin{figure}
 \centering	
\vspace{-1mm}
\includegraphics[width=\columnwidth]{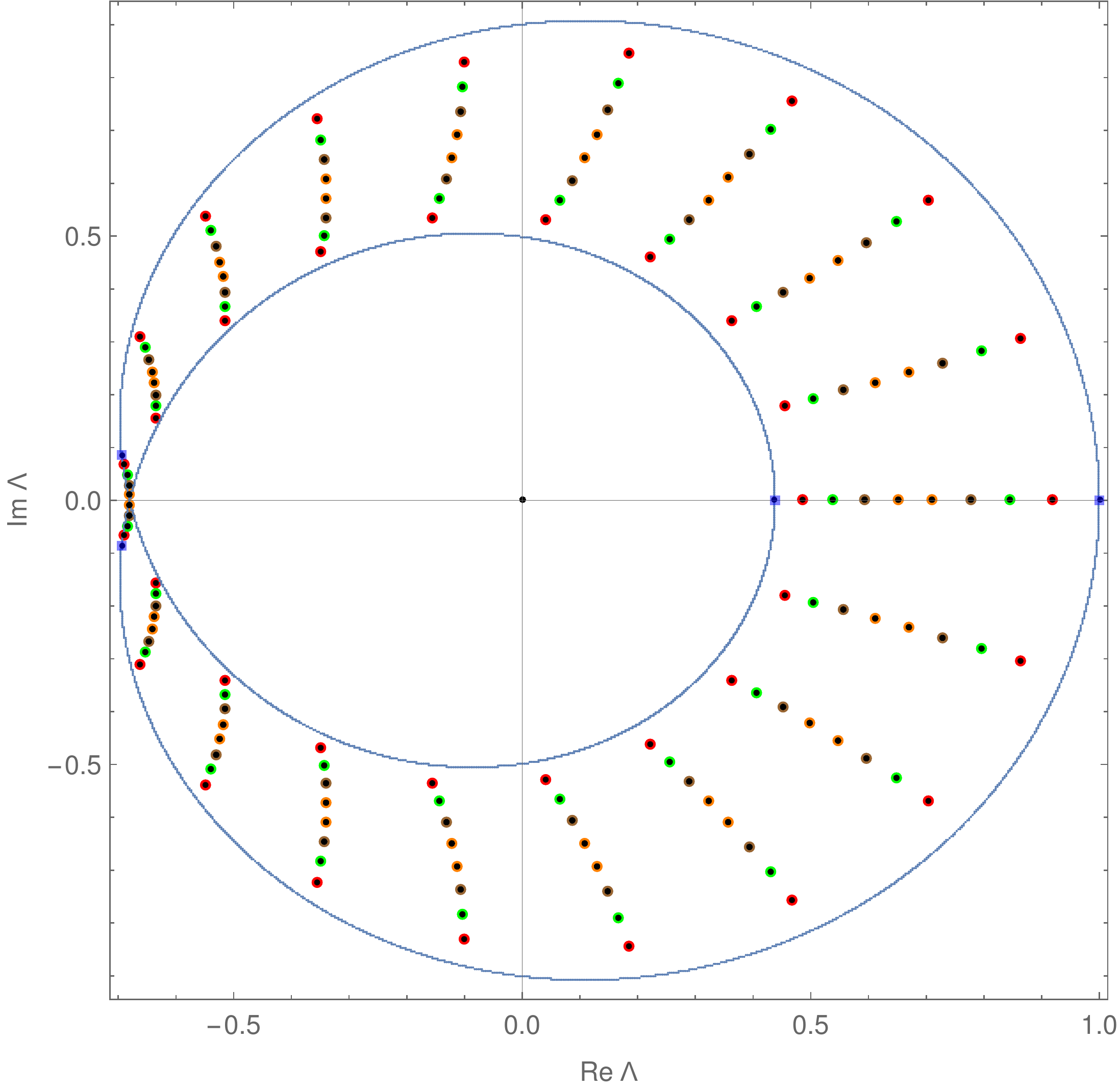}
\vspace{-1mm}
\caption{(taken from \cite{ProsenBuca}) The spectrum of the Markov propagator $U$ of the boundary driven Rule 54 for $n=12$ driving parameters $\alpha=1/4,\beta=1/3,\gamma=1/5,\delta=2/7$. The black dots show the numerical results. The red ($p=1$), green ($p=2$), brown ($p=3$), orange ($p=4$) points are solutions to the equations for the $p$-th orbital eigenvalues~(\ref{conj}). The blue squares are the roots of characteristic polynomial for the NESS-orbital. The blue curve is an algebraic curve to which the first orbital converges in the thermodynamic limit.}
\label{fig:orbitals}
\end{figure}

\section{Conclusion}

In this minireview we outlined recent results on \emph{interacting integrable} 1D quantum bulk coherent and \emph{interacting integrable} classical 1D bulk deterministic lattice systems driven by pairs of stochastic baths. The solution for the non-equilibrium steady state turned out to be, in all cases studied, given in a matrix product ansatz form. 

The quantum cases studied are the open maximally driven XXZ spin chain and the open maximally driven 1D fermionic Hubbard model. The classical model was a stochastically driven Rule 54 reversible cellular automaton. In the quantum case we managed to find the NESS using algebraic Bethe ansatz related methods and in the Rule 54 case the solution was found by means of a quite novel cancellation mechanism \eqref{bulk1}, which seems to be the fundamental integrability condition of the model. This algebraic cancellation relation allowed us to write the NESS in terms of a staggered matrix product ansatz. 

In the case of Rule 54 model we have also shown how to deform the solution for the NESS to find a class of higher decay modes of the Markov propagator in terms of an inhomogenous (site-dependent) matrix product ansatz. 

Many fundamentally interesting open questions remain. The analogies between the classical deterministic and quantum bulk coherent models seem numerous. One would like to understand whether there is a unifying mathematical framework for all solvable boundary stochastically driven cases. Perhaps such a framework would allow to find many other solvable models and the techniques to solve them.

Furthermore, one would like to go beyond the steady state by finding all the decay modes of the models, i.e. to fully diagonalize the Liouvillian propagator (Markov matrix). A class of the decay modes of the Rule 54 model (including the leading decay mode) were found using the aforementioned generalization of the bulk cancellation me\-chanism \eqref{bulk2}. One would like to find the rest of the decay modes, i.e. to provide a complete Bethe ansatz diagonalization of the Markov matrix.

An outstanding open question for several years now has been finding decay modes of the interacting boundary driven quantum models. This is still unsolved. 

Even more intriguing is the question on the possibility of genralization or classification of integrable dissipative boundaries. A pending issue is to understand why in quantum chains only the situations with pure source/pure sink boundaries appear to be exactly solvable. It seems that the concept of Sklyanin's reflection algebra is not useful in this case.

Finally, it would be very interesting to go beyond studying expectation values of the current and other interesting observables and compute the 
full counting statistics and the closely related large deviation theory in both cases, quantum and classical. 

\section*{Acknowledgement}

The work has been supported by European Research Council (ERC) Advanced grant 694544 -- OMNES and EPSRC programme grant EP/P009565/1.

\end{document}